\newcounter{theorem}
\newcounter{definition}
\newcounter{lemma}
\newcounter{claim}
\newcounter{problem}
\newcounter{proposition}
\newcounter{corollary}
\newcounter{construction}
\newcounter{example}
\newcounter{xca}
\newcounter{comments}
\newcounter{remark}
\newcounter{assumption}
\newtheorem{theorem}[theorem]{Theorem}
\newtheorem{lemma}[lemma]{Lemma}
\newtheorem{problem}[problem]{Problem}
\newtheorem{definition}[definition]{Definition}
\newtheorem{remark}[remark]{Remark}
\newtheorem{assumption}[assumption]{Assumption}
\numberwithin{equation}{section}
\DeclareFontFamily{U}{stix2bb}{}
\DeclareFontShape{U}{stix2bb}{m}{n} {<-> stix2-mathbb}{}
\newtcolorbox{resp}[1][]{%
	enhanced jigsaw,%
	colback=gray!5!white,%
	colframe=gray!80!black,%
	size=small,%
	boxrule=1pt,%
	halign title=flush center,%
	coltitle=black,%
	breakable,%
	drop shadow=black!50!white,%
	attach boxed title to top left={xshift=1cm,yshift=-\tcboxedtitleheight/2,yshifttext=-\tcboxedtitleheight/2},%
	minipage boxed title=3cm,%
	boxed title style={%
		colback=white,%
		size=fbox,%
		boxrule=1pt,%
		boxsep=2pt,%
		underlay={%
			\coordinate (dotA) at ($(interior.west) + (-0.5pt,0)$);
			\coordinate (dotB) at ($(interior.east) + (0.5pt,0)$);
			\begin{scope}[gray!80!black]
				\fill (dotA) circle (2pt);
				\fill (dotB) circle (2pt);
			\end{scope}
		}%
	},%
	#1%
}
\definecolor{blush}{rgb}{0.87, 0.36, 0.51}
\definecolor{blue(ryb)}{rgb}{0.01, 0.28, 1.0}
\definecolor{fashionfuchsia}{rgb}{0.96, 0.0, 0.63}
\definecolor{byzantine}{rgb}{0.74, 0.2, 0.64}
\definecolor{electricviolet}{rgb}{0.56, 0.0, 1.0}
\let\NAT@parse\undefined
\def\@opargbegintheorem#1#2#3{\textit{#1\ #2} \textit{(#3):}}
\definecolor{START}{rgb}{1.00, 0.76, 0.07}
\definecolor{TARGET}{rgb}{0.00, 0.50, 0.50}
\definecolor{OBSTACLES}{rgb}{1.00, 0.44, 0.38}
\definecolor{START1}{rgb}{0, 1, 0}
\definecolor{TARGET1}{rgb}{0.3010 0.7450 0.9330}
\definecolor{OBSTACLES1}{rgb}{1, 0, 0}
\begin{document}
	
\title{From Dissipativity Property to Data-Driven GAS Certificate of Degree-One Homogeneous Networks with Unknown Topology}
\author{Abolfazl Lavaei, \IEEEmembership{Senior Member,~IEEE}, and David Angeli, \IEEEmembership{Fellow,~IEEE}	
	\thanks{A. Lavaei is with the School of Computing, Newcastle University, United Kingdom. D. Angeli is with the Department of Electrical and Electronic Engineering, Imperial College London, United Kingdom, and with the Department of Information Engineering, University of Florence, Italy. Email: \texttt{abolfazl.lavaei@newcastle.ac.uk}, \texttt{d.angeli@imperial.ac.uk}.}
}

\maketitle
\begin{abstract}
	In this work, we propose a data-driven \emph{divide and conquer} strategy for the stability analysis of interconnected homogeneous nonlinear networks of degree one with \emph{unknown models} and a \emph{fully unknown topology}. The proposed scheme leverages joint dissipativity-type properties of subsystems described by
	\emph{storage functions}, while providing a stability certificate over unknown interconnected networks. In our data-driven framework, we begin by formulating the required conditions for constructing storage functions as a robust convex program (RCP). Given that unknown models of subsystems are integrated into one of the constraints of the RCP, we collect data from trajectories of each unknown subsystem and provide a scenario convex program (SCP) that aligns with the original RCP. We solve the SCP as a linear program and construct a storage function for each subsystem with unknown dynamics. Under some newly developed \emph{data-driven compositionality conditions}, we then construct a \emph{Lyapunov function} for the \emph{fully unknown interconnected network} utilizing storage functions derived from data of individual subsystems. We show that our data-driven {divide and conquer strategy} provides correctness guarantees (as opposed to probabilistic confidence) while significantly mitigating the sample complexity problem existing in data-driven approaches. To illustrate the effectiveness of our proposed results, we apply our approaches to three different case studies involving interconnected homogeneous (nonlinear) networks with \emph{unknown models}. We collect data from trajectories of unknown subsystems and verify the \emph{global asymptotic stability (GAS)} of the interconnected system with a guaranteed confidence.
\end{abstract}

\begin{IEEEkeywords}
	Data-driven GAS certificate, compositional dissipativity approach, storage and Lyapunov functions, robust convex program, scenario convex program, formal methods
\end{IEEEkeywords}

\section{Introduction}

In recent years, data-driven analysis of large-scale networks has become increasingly prevalent, driven by their extensive applications in real-world safety-critical systems. Notably, many complex and heterogeneous systems lack closed-form mathematical models that are either unavailable or too intricate to be practically useful. As a result, the model-based techniques proposed in relevant literature cannot effectively analyze such complex networks. While there are \emph{indirect} data-driven approaches based on identification techniques to analyze unknown systems by learning their approximate models (see \emph{e.g.,}~\cite{Hou2013model,wang2018safe,cheng2019end,sadraddini2018formal,lindemann2020learning}), obtaining accurate models for complex systems remains highly challenging, time-consuming, and expensive. Furthermore, in the realm of stability analysis, even if a model can be identified using system identification techniques, there remains the necessity to find a Lyapunov function for the acquired model to demonstrate stability properties. Consequently, computational complexity arises in both stages: identifying the model and constructing a Lyapunov function. 
Therefore, it is crucial to develop \emph{direct} data-driven approaches to bypass the system identification phase and directly utilize system measurements for conducting the underlying analysis~\cite{dorfler2022bridging} (cf. Fig.~\ref{Fig100}).

\begin{figure} 
	\begin{center}
		\includegraphics[width=0.87\linewidth]{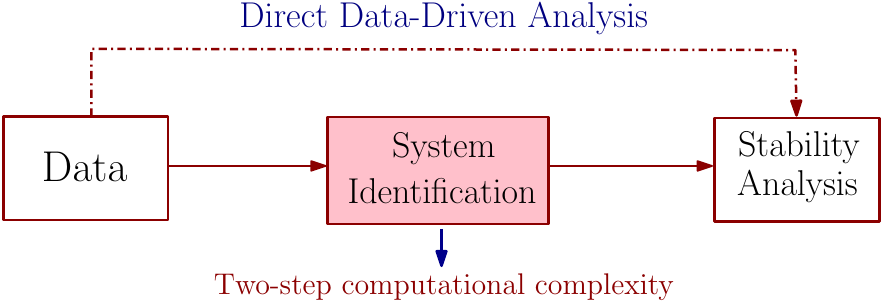} 
		\caption{Indirect (\emph{i.e.,} system identification) vs. direct data-driven techniques.} 
		\label{Fig100}
	\end{center}
\end{figure}

Over the past two decades, several advancements have emerged in the realm of data-driven optimization techniques. Notably, the scenario approach was initially introduced in~\cite{calafiore2006scenario} for addressing robust control analysis and synthesis problems through semi-infinite convex programming. By introducing a probabilistic relaxation of the worst-case robust paradigm, the robust control problem outlined in~\cite{calafiore2006scenario} can be tackled via random sampling of constraints. Building upon the foundation laid in~\cite{calafiore2006scenario}, the work~\cite{calafiore2010random} presents a framework for random convex programs, offering a key advantage over~\cite{calafiore2006scenario} by establishing an explicit bound on the upper tail probability of violation. Furthermore, this framework is extended to encompass random convex programs with posteriori violated constraints, aimed at enhancing the optimal objective value while effectively managing the violation probability.

An initial probabilistic connection between the optimal value of scenario convex programs and that of robust convex and chance-constrained programs is introduced in~\cite{esfahani2014performance,margellos2014road}, where the presented findings in~\cite{esfahani2014performance} are also extended to a specific class of non-convex problems involving binary decision variables. Furthermore,~\cite{mohajerin2018infinite} presents a constructive framework for establishing a relationship between an infinite-dimensional linear programming and a manageable finite convex program. The proposed results in this work are founded on randomized optimization and first-order methods, offering both a priori and posteriori performance guarantees.

In recent years, there have been notable advancements in the formal stability analysis of unknown dynamical systems through data-driven approaches. In this respect, \cite{kenanian2019data} presents a data-driven stability analysis of black-box switched linear systems by constructing a \emph{common} Lyapunov function, offering a stability guarantee based on both the number of observations and a required level of confidence. Building upon this work, \cite{rubbens2021data} extends the approach to propose a data-driven stability analysis of switched linear systems using \emph{sum of squares} Lyapunov functions. Data-driven stability verification of homogeneous  systems with a probabilistic level of confidence is proposed in \cite{Lavaei_Data_CDC2022}. Additionally, \cite{boffi2020learning} introduces a data-driven approach for the stability analysis of continuous-time unknown systems. In~\cite{de2019persistency}, a parameterization scheme for linear-feedback systems utilizing data-dependent linear matrix inequalities is introduced, extending the stabilization problem to include output-feedback control design. Meanwhile,~\cite{coulson2019regularized} presents a data-enabled predictive control algorithm for unknown stochastic linear systems. This approach utilizes noise-corrupted input/output data to predict future trajectories and compute optimal control policies.  A data-driven adaptive second-order sliding mode control approach is presented in~\cite{samari2025data}, which guarantees semi-global asymptotic stability of the origin in the presence of disturbances. In~\cite{steentjes2021h}, an $H_{\infty}$ performance analysis and distributed controller synthesis for interconnected linear systems were proposed using noisy input-state data. Data-driven learning of incremental ISS controllers for unknown nonlinear polynomial dynamics has recently been proposed in~\cite{zaker2024certified}.

In contrast to existing literature, we introduce a data-driven approach for the stability analysis of \emph{large-scale networks} with  \emph{unknown models and interconnected topology}, employing a \emph{divide and conquer strategy}. While previous works such as~\cite{kenanian2019data,rubbens2021data,Lavaei_Data_CDC2022,boffi2020learning} focus solely on monolithic systems, they encounter significant challenges, known as \emph{sample complexity}, particularly in handling high-dimensional systems. This complexity manifests in the \emph{exponential} growth of the required data volume to provide stability certificates as the state space dimension increases. In contrast, utilizing our divide and conquer strategy, we break down the sample complexity to the level of subsystems, where the number of data grows \emph{linearly} with the number of subsystems.  Furthermore, while the stability guarantees in~\cite{kenanian2019data,rubbens2021data,Lavaei_Data_CDC2022,boffi2020learning} are based on probabilistic confidence levels, our stability results offer \emph{a guaranteed confidence of 1}.

Our primary  contribution is to develop a \emph{data-driven compositional approach} for analyzing the stability of interconnected homogeneous nonlinear networks, where each subsystem is homogeneous {of degree one} with unknown models. The proposed framework capitalizes on the joint dissipativity-type properties of subsystems, characterized by \emph{homogeneous storage functions} of degree 2. Within our data-driven context, we collect data from trajectory of each unknown subsystem and devise a scenario convex program (SCP) to enforce the required conditions of storage functions. By solving the SCP, we construct a data-driven storage function for each unknown subsystem. Subsequently, we construct a \emph{Lyapunov function} for the unknown interconnected network with a \emph{confidence level of 1} based on storage functions of individual subsystems, under a newly developed compositionality condition derived from data. We show that our innovative data-driven compositional approach eliminates the necessity to verify the traditional compositional condition based on dissipativity theory that requires the interconnected topology (cf.~\cite[Proposition 2.1]{2016Murat}). 

In the scenario where the interconnection topology is known, we provide a sub-result where we initially establish a storage function for each unknown subsystem based on data. This is achieved by solving an SCP for each unknown subsystem, ensuring a guaranteed confidence level of 1. Subsequently, we examine the traditional dissipativity compositionality condition to determine its satisfaction. This condition is represented as a linear matrix inequality (LMI) that interrelates individual storage functions. To avoid potential posteriori checks for this condition, we employ an alternating direction method of multipliers (ADMM) algorithm~\cite{boyd2011distributed,anand2024compositional}, enabling efficient satisfaction of the compositionality condition in a \emph{distributed manner} while searching for storage functions of individual subsystems.

A data-driven stability analysis of homogeneous networks was previously presented in~\cite{lavaei2023data}. Our paper offers several distinct advantages compared to~\cite{lavaei2023data}. First and foremost, while~\cite{lavaei2023data} provides stability guarantees based on ISS properties of subsystems, our focus here is on deriving stability certificates for interconnected networks through the \emph{dissipativity properties} of individual subsystems. Dissipativity is a broader framework that can simplify the conditions required for stability analysis. Specifically, our approach eliminates the need for the first radially unbounded condition in the ISS Lyapunov function (see~\cite[condition (6a)]{lavaei2023data}), allowing us to achieve stability with fewer constraints. Additionally, our proposed method involves a convex optimization program. In contrast, the optimization approach used in~\cite{lavaei2023data} involves bilinearity, which can be more complex and less tractable. This shift to a convex optimization framework in our work potentially offers computational advantages and makes our method more accessible for practical implementation.

Another key difference is in the treatment of the interconnection topology. While~\cite{lavaei2023data} assumes that the interconnection topology is known, our work considers the more challenging scenario where the topology is fully unknown—a situation common in many real-world applications. Given this unknown topology, one cannot directly verify the compositionality condition in~\cite[circularity condition (9)]{lavaei2023data}. Instead, we introduce a \emph{more relaxed} compositionality condition that can be validated during the optimization process using data, thus circumventing the need for prior knowledge of the network topology. Moreover, our approach reduces computational overhead by integrating the verification of the compositionality condition directly into the optimization problem. In contrast, the method in~\cite{lavaei2023data} requires posterior checks of the compositionality condition after solving the optimization problem and constructing ISS gains, which can involve significant computational costs (see~\cite[circularity condition (9)]{lavaei2023data}). Finally, we employ an algorithm to estimate the Lipschitz constant from data, whereas~\cite{lavaei2023data} relies on analytical computation based on specific system knowledge (see~\cite[Lemma 5.4]{lavaei2023data}). Our data-driven approach simplifies the process and avoids the need for system knowledge.

While the studies~\cite{romer2019one,van2022data,maupong2017lyapunov} investigate dissipativity properties based on data, they all focus on \emph{monolithic systems}, which face challenges with sample complexity when applied to large-scale dynamics. In contrast, our approach considers a network of thousands of subsystems, and aims to provide stability guarantees for the overall network by leveraging the dissipativity properties of individual subsystems. In particular, we assume that each subsystem is dissipative with respect to a supply rate (cf. condition~\eqref{Eq:8_2}), meaning that the energy consumed by a subsystem exceeds the energy it generates. We design these supply rates for each subsystem and use them to extend stability guarantees across the interconnected network by verifying the compositionality condition in~\eqref{Con2}, constructed from the supply rates derived from data. Recently, ISS, incremental ISS, and robust analysis of interconnected networks have been investigated in~\cite{zaker2025data}, \cite{zaker2025data1}, and \cite{samari2025data1}, respectively. 

The remainder of the paper is organized as follows. Section~\ref{Problem_Description} introduces mathematical notations, and formally defines discrete-time nonlinear networks. In Section~\ref{Data: Dissipativity}, we present a stability certificate for interconnected networks based on dissipativity-type properties of subsystems described by a notion of storage functions. Section~\ref{Data-Driven} outlines our data-driven scheme to construct storage functions for unknown subsystems from data.
In Section~\ref{Guarantee_RCP}, we construct a Lyapunov function for the interconnected system based on storage functions of individual subsystems, derived from data, ensuring a guaranteed confidence level of 1. Section~\ref{ADMM} discusses the use of an ADMM algorithm to efficiently satisfy the compositionality condition in a distributed manner in the case of known interconnection topologies. In Section~\ref{Case_Study}, we demonstrate our data-driven approaches using a room temperature network with unknown dynamics. Finally, Section~\ref{Discussion} presents concluding discussions.

\section{Problem Description}\label{Problem_Description}

\subsection{Notation}
We employ the following notation throughout the paper. Sets of real, positive, and non-negative real numbers are denoted by $\mathbb{R},\mathbb{R}^+$, and $\mathbb{R}^+_0$, respectively. We denote sets of non-negative and positive integers by $\mathbb{N} := \{0,1,2,\ldots\}$ and $\mathbb{N}^+=\{1,2,\ldots\}$, respectively. Given $N$ vectors $x_i \in \mathbb{R}^{n_i}$, $x=[x_1;\ldots;x_N]$ denotes the corresponding column vector of dimension $\sum_i n_i$.
We represent the Euclidean norm of a vector $x\in\mathbb{R}^{n}$ by $\Vert x\Vert$, while the absolute value of $a\in\mathbb{R}$ is denoted by $\vert a\vert$.  The Frobenius norm of a matrix $P\in \mathbb R^{n\times m}$ is denoted by $\Vert P \Vert_F$. Given sets $X_i, i\in\{1,\ldots,N\}$,
their Cartesian product is denoted by $\prod_{i=1}^{N}X_i$. We define an $(n \times n)$ identity matrix by  $\mathds{I}_n$, while an $(n \times m)$ zero matrix is denoted by $\mathbf{0}_{n\times m}$. 
A function $\eta\!:\mathbb\mathbb \mathbb R_{0}^+\rightarrow\mathbb\mathbb \mathbb R_{0}^+,$ is said to be a class $\mathcal{K}$ function if it is continuous, strictly increasing, and $\eta(0)=0$. A function $\eta: \mathbb{R}_{0}^+ \times \mathbb{R}_{0}^+ \rightarrow \mathbb{R}_{0}^+$ is said to belong to class $\mathcal {KL}$ if, for each fixed $s$, the map
$\eta(r,s)$ belongs to class $\mathcal K$ with respect to $r$ and, for each fixed $r>0$, the map $\eta(r,s)$ is decreasing with respect to $s$, and
$\eta(r,s)\rightarrow 0$ as $s \rightarrow \infty$.

\subsection{Discrete-Time Nonlinear Systems}

In this work, we consider discrete-time nonlinear systems (dt-NS), as formalized in the following definition.

\begin{definition}
	A discrete-time nonlinear system (dt-NS) is characterized by the tuple
	\begin{equation}\label{Eq:11}
		\Sigma = (\mathbb R^{n},f),
	\end{equation}
	where:
	\begin{itemize}
		\item $\mathbb R^{n}$ is the state space of the system;
		\item $f\!:\mathbb R^{n}\rightarrow \mathbb R^{n}$ is a continuous function governing the system's state evolution, assumed to be {homogeneous of degree one, \emph{i.e.,} for any $\lambda > 0$ and $x\in \mathbb R^{n}$, $f(\lambda x) = \lambda f(x)$~\cite[Definition 1.1]{polyakov2020generalized}}. We presume the map $f$ remains unknown to us.
	\end{itemize}
	
	Given an initial state $x(0)\in \mathbb{R}^{n}$, the evolution of the state of $\Sigma$ can be characterized as
	\begin{align}\label{Eq:21}
		\Sigma\!:x(k+1)=f(x(k)),\quad \quad k\in\mathbb N.
	\end{align}
\end{definition}

In this work, our aim is to verify the \emph{global asymptotic stability (GAS)} of $\Sigma$ concerning its equilibrium point, as defined below.

\begin{definition}\label{GAS}
	A dt-NS $\Sigma$ is deemed \emph{globally asymptotically stable (GAS)} with respect to the equilibrium point $x = 0$ if $$ \Vert x(k) \Vert \leq \eta(\Vert x(0)\Vert ,k),$$
	for all $x(0)\in \mathbb R^n$ and $\eta \in \mathcal {KL}$, indicating the convergence of all solutions of $\Sigma$ to the origin as $k\rightarrow \infty$.
\end{definition}

In the following section, we analyze stability of the interconnected dt-NS by studying dissipativity properties of individual subsystems.

\section{Stability Certificate via Dissipativity Approach}\label{Data: Dissipativity}

Here, we investigate the stability of interconnected networks by considering dissipativity-type properties of subsystems described by storage functions~\cite{2016Murat}. To achieve this, we represent discrete-time nonlinear \emph{subsystems} as
\begin{equation}\label{Eq:1}
	\Sigma_i = (\mathbb R^{n_i},\mathbb R^{p_i},f_i),
\end{equation}
where $\mathbb R^{n_i}$ is the state set of the subsystem, $\mathbb R^{p_i}$ is the \emph{internal input} set of the subsystem (which is used for the sake of interconnection), and $f_i:\mathbb R^{n_i}\times \mathbb R^{p_i}\rightarrow \mathbb R^{n_i}$ is a continuous function which is homogeneous of degree one, \emph{i.e.,} for any $\lambda > 0$ and $x_i\in \mathbb R^{n_i},w_i\in \mathbb R^{p_i}$, $f_i(\lambda x_i, \lambda w_i) = \lambda f_i(x_i,w_i)$. Given an initial state $x_i(0)\in \mathbb R^{n_i}$ and input sequence $w_i(\cdot)\!: \mathbb N \rightarrow \mathbb R^{p_i}$, the evolution of the state of $\Sigma_i$ is characterized as
\begin{align}\label{Eq:2}
	\Sigma_i\!:{x}_i(k+1)=f_i(x_i(k),w_i(k)),\quad \quad k\in\mathbb N.
\end{align}

Given $\Sigma_i=(\mathbb R^{n_i},\mathbb R^{p_i},f_i), i\in\{1,\dots,M\}$, with a matrix $\mathcal M$ being the coupling between subsystems, the interconnection of $\Sigma_i$ is
$\Sigma=(\mathbb R^{n},f)$ as in~\eqref{Eq:11}, denoted by
$\mathcal{I}(\Sigma_1,\ldots,\Sigma_M)$, where $n = \sum_{i = 1}^M n_i$ and $f(x):=[f_1(x_1,w_1);\dots;f_M(x_M,w_M)]$, subjected to the
following interconnection constraint:
\begin{align}\label{Eq:4}
	\Big[w_{1};\cdots;w_{M}\Big]=\mathcal M\Big[x_1;\cdots;x_M\Big]\!.
\end{align}
As all subsystems $\Sigma_i, i\in\{1,\dots,M\}$, are homogeneous of degree one, the interconnected dt-NS $\Sigma$ maintains homogeneity with the same degree. The schematic representation of an interconnected dt-NS $\Sigma$ is illustrated in Fig.~\ref{Fig1}.

\begin{figure} 
	\begin{center}
		\includegraphics[width=0.7\linewidth]{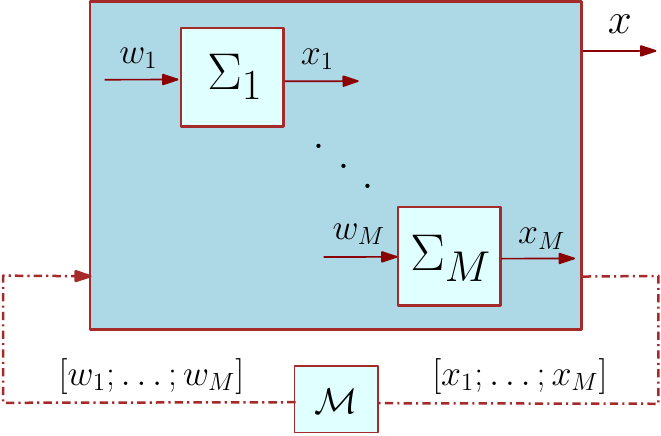} 
		\caption{Interconnected dt-NS $\mathcal{I}(\Sigma_1,\ldots,\Sigma_M)$.} 
		\label{Fig1}
	\end{center}
\end{figure}

To establish the stability certificate of interconnected dt-NS based on dissipativity reasoning, we initially define a concept of storage functions for each subsystem.

\begin{definition}\label{Def:4}
	Given a subsystem $\Sigma_i=(\mathbb R^{n_i},\mathbb R^{p_i},f_i)$, a positive-definite function $\mathcal{S}_i:\mathbb R^{n_i}\rightarrow\mathbb{R}_{0}^{+}$, \emph{i.e.,} $\mathcal{S}_i(0) = 0$ and $\mathcal{S}_i(x_i) > 0$ for any $x_i\in X_i \backslash 0$,  is called a storage function (SF) for $\Sigma_i$ if there exists a symmetric matrix $\mathcal X_i$ with partitions $\mathcal X_i^{jj'}$, $j,j'\in\{1,2\}$, such that:
	\begin{align}\notag
		&\forall x_i \in \mathbb R^{n_i}, \forall w_i \in \mathbb R^{p_i}\!:\\\label{Eq:8_2}
		&\mathcal S_i(f_i(x_i,w_i)) - \mathcal S_i(x_i) \leq \begin{bmatrix}
			w_i\\
			x_i
		\end{bmatrix}^\top\!
		\underbrace{\begin{bmatrix}
				\mathcal X_i^{11}&\mathcal X_i^{12}\\
				\mathcal X_i^{21}&\mathcal X_i^{22}
		\end{bmatrix}}_{\mathcal X_i}\!\begin{bmatrix}
			w_i\\
			x_i
		\end{bmatrix}\!\!.
	\end{align}
	The expression on the right-hand side of~\eqref{Eq:8_2} is referred to as the \emph{supply rate} of the system. We assume that storage functions $\mathcal{S}_i$ are homogeneous with degree $2$, \emph{i.e.,} for any $\lambda > 0$ and $x_i\in \mathbb R^{n_i}$, $\mathcal S_i(\lambda x_i) = \lambda^2 \mathcal S_i(x_i)$.
\end{definition}

The following theorem, adapted from~\cite[Proposition 2.1]{2016Murat}, offers the stability guarantee for the interconnected dt-NS based on dissipativity reasoning.

\begin{theorem}\label{Thm:3}
	Consider an interconnected dt-NS $\Sigma = \mathcal{I}(\Sigma_1,\ldots,\Sigma_M)$ composed of $M\in\mathbb{N}^+$ individual subsystems~$\Sigma_i$ and a coupling matrix $\mathcal M$ among them. Suppose that each $\Sigma_i$ admits an SF $\mathcal S_i$, as presented in Definition~\ref{Def:4}. If
	\begin{align}\label{EQQ:2}
		&\quad\begin{bmatrix}
			\mathcal M\\\mathds{I}
		\end{bmatrix}^\top \!\!\mathcal X_{cmp}\begin{bmatrix}
			\mathcal M\\\mathds{I}
		\end{bmatrix}\preceq 0, \\\notag
		\text{with} \quad 
		\mathcal X_{cmp}:=&\begin{bmatrix}
			\mathcal X_1^{11}&&&\mathcal X_1^{12}&&\\
			&\ddots&&&\ddots&\\
			&&\mathcal X_M^{11}&&&\mathcal X_M^{12}\\
			\mathcal X_1^{21}&&&\mathcal X_1^{22}&&\\
			&\ddots&&&\ddots&\\
			&&\mathcal X_M^{21}&&&\mathcal X_M^{22}
		\end{bmatrix}\!\!,
	\end{align}
	then $\mathcal V: \mathbb R^n\to \mathbb R^+_0$ in the form of \begin{align}\label{Lyapunov}
		\mathcal V(x) := \sum_{i=1}^{M}\mathcal S_i(x_i)
	\end{align}
	is a Lyapunov function for the interconnected system $\Sigma = \mathcal{I}(\Sigma_1,\ldots,\Sigma_M)$, satisfying the following two conditions:
	\begin{subequations}
		\begin{align}\label{alpha0}
			&\forall x \in \mathbb R^{n} \backslash \{0 \}\!:\quad 
			\mathcal V(x)>0,\\\label{alpha1}
			&\forall x \in \mathbb R^{n}\backslash \{0 \}\!:\quad 
			\mathcal V(f(x)) < \mathcal V(x).
		\end{align}
	\end{subequations}
	As a consequence, the interconnected system $\Sigma = \mathcal{I}(\Sigma_1,\ldots,\Sigma_M)$ is GAS with respect to $x = 0$, according to Definition~\ref{GAS}.
\end{theorem}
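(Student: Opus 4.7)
The plan is to directly verify the two Lyapunov conditions \eqref{alpha0} and \eqref{alpha1} for the candidate $\mathcal V(x)=\sum_{i=1}^M\mathcal S_i(x_i)$, after which GAS with respect to $x=0$ follows from standard discrete-time Lyapunov reasoning together with radial unboundedness of $\mathcal V$.

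Condition \eqref{alpha0} is essentially immediate: each storage function $\mathcal S_i$ is positive definite by Definition~\ref{Def:4}, so whenever $x=[x_1;\ldots;x_M]\neq 0$ at least one component $x_i$ is nonzero, which forces $\mathcal S_i(x_i)>0$ and hence $\mathcal V(x)>0$. As a byproduct, homogeneity of $\mathcal S_i$ of degree $2$ combined with continuity and strict positivity on the compact unit sphere yields constants $c_i>0$ with $\mathcal S_i(x_i)\geq c_i\Vert x_i\Vert^2$, giving radial unboundedness of $\mathcal V$, which is needed later to extract a class $\mathcal{KL}$ bound.

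For condition \eqref{alpha1}, I would first expand
\begin{align*}
\mathcal V(f(x))-\mathcal V(x)=\sum_{i=1}^M\big[\mathcal S_i(f_i(x_i,w_i))-\mathcal S_i(x_i)\big],
\end{align*}
apply the dissipativity inequality \eqref{Eq:8_2} to each summand, stack $w=[w_1;\ldots;w_M]$ and $x=[x_1;\ldots;x_M]$ so that the block-diagonal quadratic forms collect into $[w;x]^{\top}\mathcal X_{cmp}[w;x]$, and substitute the interconnection constraint \eqref{Eq:4}, namely $w=\mathcal M x$. The result is precisely the quadratic form appearing in the LMI \eqref{EQQ:2} evaluated at $x$, which is bounded above by $0$, and hence $\mathcal V(f(x))\leq \mathcal V(x)$.

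The main obstacle I foresee is upgrading this non-strict conclusion to the strict decrease $\mathcal V(f(x))<\mathcal V(x)$ on $\mathbb R^n\setminus\{0\}$ required in \eqref{alpha1}, since both \eqref{Eq:8_2} and \eqref{EQQ:2} are stated with non-strict inequalities. I would exploit the homogeneity structure: because $f_i$ is homogeneous of degree $1$ and $\mathcal S_i$ of degree $2$, the map $x\mapsto\mathcal V(f(x))-\mathcal V(x)$ is homogeneous of degree $2$, so the analysis localises on the compact unit sphere. A LaSalle-type argument applied to the set on which equality holds, combined with positive-definiteness of each $\mathcal S_i$ and the degree-one dynamics, should collapse this invariant set to $\{0\}$. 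Once both Lyapunov properties are in hand together with radial unboundedness, the class $\mathcal{KL}$ bound in Definition~\ref{GAS} follows from standard comparison function constructions, concluding GAS of $\Sigma=\mathcal I(\Sigma_1,\ldots,\Sigma_M)$.
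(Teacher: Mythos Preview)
The paper does not prove Theorem~\ref{Thm:3}; it merely imports the statement from \cite[Proposition~2.1]{2016Murat}, so there is no in-paper argument to compare against. Your route---summing the dissipation inequalities \eqref{Eq:8_2} over $i$, assembling the result as $[w;x]^\top\mathcal X_{cmp}[w;x]$, inserting the coupling $w=\mathcal Mx$, and invoking \eqref{EQQ:2}---is the standard and essentially the only argument, and your treatment of \eqref{alpha0} and of radial unboundedness via degree-$2$ homogeneity is correct.

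Your diagnosis of the strict-versus-non-strict discrepancy is accurate, but the proposed LaSalle repair cannot close the gap in general: with both \eqref{Eq:8_2} and \eqref{EQQ:2} non-strict, the hypotheses are satisfied by $f_i(x_i,w_i)=x_i$, $\mathcal S_i(x_i)=\Vert x_i\Vert^2$, and $\mathcal X_i=0$ (for any $M$ and any $\mathcal M$), yet every point is a fixed point and the interconnection is not GAS. No invariance argument can manufacture the strict decrease \eqref{alpha1} or GAS from these assumptions alone, so the largest invariant set in $\{x:\mathcal V(f(x))=\mathcal V(x)\}$ need not collapse to $\{0\}$. The defect is therefore in the theorem statement rather than in your outline: one needs $\prec 0$ in \eqref{EQQ:2} (which is what the paper in fact obtains numerically in its case studies and what the strict inequality in the data-driven condition \eqref{Con2} effectively enforces) or a strictly negative slack in \eqref{Eq:8_2}. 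Your write-up should flag this and proceed under the strengthened hypothesis, after which your argument goes through verbatim and directly yields \eqref{alpha1} without any appeal to LaSalle.
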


\begin{remark}
	Note that if the interconnection topology $\mathcal{M}$ in the model-based analysis is unknown, the only way to verify the compositionality condition in~\eqref{EQQ:2} is to ensure that the composed supply rate $\mathcal{X}_{\text{cmp}}$ is negative semi-definite, which is a highly conservative requirement. However, as demonstrated later in Section~\ref{Data-Driven}, our data-driven approach can handle networks with an unknown topology $\mathcal{M}$ without necessitating $\mathcal{X}_{\text{cmp}}$ to be negative semi-definite (cf. condition~\eqref{Con2}).
\end{remark}

In this work, our focus lies in constructing storage functions $\mathcal{S}_i$ to locally satisfy condition~\eqref{Eq:8_2} over the \emph{unit sphere} $\Vert(x_i,w_i)\Vert = 1, i\in\{1,\dots,M\}$. We then transfer the results for the interconnected network over the $\Vert x\Vert = 1$. Thanks to the \emph{homogeneity property} of the maps $f_i$ and storage functions $\mathcal{S}_i$, which ensure the homogeneity of the interconnected network, thereby the stability certificate can be globally guaranteed in $\mathbb{R}^n$.

\begin{remark}\label{Rem:10}
	To demonstrate that the Lyapunov function $\mathcal{V}$ for the interconnected network can be constructed using the storage functions $\mathcal{S}_i$ of individual subsystems, as shown in~\eqref{Lyapunov}, \emph{over the unit sphere}, it is required that $\Vert x\Vert = 1 \Leftrightarrow \Vert (x_i, w_i)\Vert = 1$. To ensure this, we assume that the state measurement of all subsystems in the interconnection topology is available, which is not a restrictive assumption. Now by considering  $w_i=[{x_1;\ldots;x_{i-1};x_{i+1};\ldots;x_{M}}]$, one can readily ensure that $\Vert (x_i, w_i)\Vert =1 \Leftrightarrow \Vert x\Vert  =1$. It is important to note that while $w_i$ is formulated based on the states of all other subsystems except itself, \emph{this does not necessitate full interconnection}. In fact, $w_i$ must incorporate the sates of other subsystems to maintain a well-posed setting over the unit sphere. However, the specific interconnection topology is determined by the values of the interconnection matrix $\mathcal M$ in~\eqref{Eq:4}, where the nonzero entries shape the nature of the interconnection.
\end{remark}

\begin{remark}
	It is worth noting that while the assumption of homogeneity is well-established in control theory and allows for the extension of the stability certificate globally to $\mathbb{R}^n$, it might  limit the class of systems considered. Relaxing this condition would enable the consideration of a broader class of systems; however, it would come at the cost of providing only \emph{local}, rather than global, stability certificates. We would like to emphasize that homogeneity of degree one is not restricted only to linear systems, even though all linear systems are homogeneous of degree one. We refer readers to the third case study in Subsection~\ref{third}, where we demonstrate that our framework can successfully handle a \emph{nonlinear} homogeneous system of degree one.
\end{remark}

Given the unknown dynamics of each subsystem $f_i$ in our setting, it is impractical to search for storage functions $\mathcal{S}_i$ that fulfill condition~\eqref{Eq:8_2}. Moreover, since the interconnection topology $\mathcal{M}$ is fully unknown, directly verifying the compositionality condition in~\eqref{EQQ:2} is not feasible. To address these challenges, we propose a \emph{direct data-driven} approach that introduces a novel compositional condition based on data to construct a Lyapunov function for unknown interconnected dt-NS in the form of~\eqref{Lyapunov} without checking the traditional dissipativity condition~\eqref{EQQ:2}. This Lyapunov function satisfies conditions~\eqref{alpha0}-\eqref{alpha1} and allows us to formally verify the GAS property of unknown interconnected dt-NS.

We will now articulate the central problem that this work aims to solve.\vspace{0.2cm} 

\begin{resp}
	\begin{problem}\label{problem1}
		Consider an unknown interconnected dt-NS $\Sigma = \mathcal{I}(\Sigma_1,\ldots,\Sigma_M)$ with a \emph{fully unknown topology}, composed of $M\in\mathbb{N}^+$ individual subsystems~$\Sigma_i$ with unknown dynamics. The primary goal is to develop a compositional data-driven scheme to construct storage functions $\mathcal S_i$ for unknown subsystems while providing a GAS certificate over interconnected dt-NS with correctness guarantees.
	\end{problem}
\end{resp}

\section{Data-Driven Construction of Storage Functions}\label{Data-Driven}

In the context of our work, we establish the structure of storage functions as
\begin{align}\label{Eq:3}
	\mathcal{S}_i(q_i,x_i)=\sum_{j=1}^{r} {q}_i^jp_i^j(x_i), 
\end{align}
with user-defined homogeneous basis functions $p_i^j(x_i)$ and unknown variables $q_i=[{q}_{i}^1;\ldots;q_i^r] \in \mathbb{R}^r$.

\begin{assumption}\label{Assm}
	We assume that the parametric storage function in~\eqref{Eq:3} is continuously differentiable and homogeneous of degree 2. Additionally, the unknown coefficients $q_i$ are assumed to lie within a compact set.
\end{assumption}

We begin by formulating the required conditions for constructing storage functions in Definition~\ref{Def:4} as the following robust convex program (RCP):

\begin{mini!}|s|[2]
	{[g_i;\mu_{R_i};\delta_{R_i}]}{\mu_{R_i}+ \delta_{R_i},}{\label{RCP}}\notag
	\addConstraint{\forall (x_i,w_i) \in \mathbb R^{n_i}\times \mathbb R^{p_i}\!\!: \Vert (x_i,w_i)\Vert = 1,}\notag
	\addConstraint{- \mathcal S_i(q_i,x_i)< \mu_{R_i},\label{RCP1}}
	\addConstraint{\mathcal S_i(q_i,f_i(x_i,w_i)) - \mathcal S_i(q_i,x_i)}{}\notag
	\addConstraint{-\begin{bmatrix}
			w_i\\
			x_i
		\end{bmatrix}^\top\!
		\begin{bmatrix}
			\mathcal X_i^{11}&\mathcal X_i^{12}\\
			\mathcal X_i^{21}&\mathcal X_i^{22}
		\end{bmatrix}\begin{bmatrix}
			w_i\\
			x_i
		\end{bmatrix}\leq \mu_{R_i},\label{RCP2}}
	\addConstraint{\begin{bmatrix}
			w_i\\
			x_i
		\end{bmatrix}^\top\!\!
		\begin{bmatrix}
			\mathcal X_i^{11}&\mathcal X_i^{12}\\
			\mathcal X_i^{21}&\mathcal X_i^{22}
		\end{bmatrix}\!\begin{bmatrix}
			w_i\\
			x_i
		\end{bmatrix} \leq \delta_{R_i}, \label{RCP3}}
	\addConstraint{g_i = [\mathcal X_i^{11};\mathcal X_i^{12};\mathcal X_i^{22};{q}_{i}^1;\dots;q_{i}^r],~ \!\mu_{R_i}\!,\delta_{R_i}\!\!\!\in\! \mathbb R.}\notag
\end{mini!}
To ensure the well-posedness of the RCP, we assume that all decision variables within vector $g_i$ are subject to compact constraints. We denote the optimal value of RCP by $\mu^*_{R_i} + \delta^*_{R_i}$. If $\mu^*_{R_i} \leq 0$, a solution to the RCP implies the fulfillment of conditions outlined in Definition~\ref{Def:4}, thus establishing $\mathcal S_i$ as a storage function for subsystems~\eqref{Eq:2}.

\begin{remark}
	It is worth noting that since condition~\eqref{RCP3} can be rewritten as
	\begin{align}
		\begin{bmatrix}
			w_i\\
			x_i
		\end{bmatrix}^\top\!\!\underbrace{\begin{bmatrix}
				\mathcal X_i^{11}&\mathcal X_i^{12}\\
				\mathcal X_i^{21}&\mathcal X_i^{22}
		\end{bmatrix}}_{\mathcal X_i}\!\begin{bmatrix}
			w_i\\
			x_i
		\end{bmatrix} \leq \delta_{R_i}\underbrace{\begin{bmatrix}
				w_i\\
				x_i
			\end{bmatrix}^\top\!\!\begin{bmatrix}
				w_i\\
				x_i
		\end{bmatrix}}_{\Vert (x_i,w_i)\Vert^2 = 1}, 
	\end{align}
	the satisfaction of this condition is equivalent to the satisfaction of $\mathcal X_i \leq  \delta_{R_i} \mathds{I}_{p_i+n_i}$. Since $\delta_{R_i}$ can take any real value, this condition does not impose restrictions on $\mathcal X_i$ when it is designed for different subsystems. Our compositional data-driven results provide a condition on the summation of $\delta_{R_i}$, ensuring that the supply rates of subsystems can influence one another and compensate for potential negative contributions within the interconnection topology (cf.~\eqref{Con2}).
\end{remark}

Addressing the proposed RCP in~\eqref{RCP} presents two significant challenges. Firstly, this RCP comprises infinitely many constraints due to the continuous nature of the state and internal input of the subsystem. Secondly, and more dauntingly, solving the RCP in~\eqref{RCP2} requires precise knowledge of the model $f_i$, which is unknown in our setting. These challenges motivated us to devise a data-driven approach for constructing storage functions, bypassing the direct solving of the RCP in~\eqref{RCP}. To accomplish this, we collect a set of two-consecutive sampled data from trajectories of unknown subsystems, represented as pairs $((\tilde x^z_i,\tilde w^z_i),f_i(\tilde x^z_i,\tilde w^z_i)), z\in\{1,\dots,\mathcal N_i\}$. In our proposed data-driven approach, the first step involves projecting all the data onto a unit sphere by normalizing them with respect to $\Vert (\tilde x^z_i,\tilde w^z_i)\Vert$, denoted as: 
\begin{align}\label{New6}
	((\hat x^z_i,\hat w^z_i),f_i(\hat x^z_i,\hat w^z_i)) = \frac{((\tilde x^z_i,\tilde w^z_i),f_i(\tilde x^z_i,\tilde w^z_i))}{\Vert (\tilde x^z_i,\tilde w^z_i)\Vert}.
\end{align}
According to Remark~\ref{Rem:10} and to ensure the well-posedness of our setting over the unit sphere, we consider $\tilde w_i^z=[{\tilde x_1^z;\ldots;\tilde x_{i-1}^z;\tilde x_{i+1}^z;\ldots;\tilde x_{M}^z}]$. Subsequently, we compute the maximum distance between any points on the unit sphere and the set of data points as follows:
\begin{align}\notag
	\varepsilon_{i} &= \max_{(x_i,w_i)}\min_z\Vert (x_i,w_i) - (\hat x^z_i,\hat w^z_i)\Vert,\\\label{EQ:41}
	&(x_i,w_i)\in \mathbb R^{n_i}\times \mathbb R^{p_i}\!: \Vert (x_i,w_i)\Vert = 1.
\end{align}

Taking into account $(\hat x^z_i,\hat w^z_i)\in \mathbb R^{n_i}\times \mathbb R^{p_i}$, where $z\in\{1,\dots,\mathcal N_i\}$, we introduce the following scenario convex program (SCP):

\begin{mini!}|s|[2]
	{[g_i;\mu_{\mathcal N_i};\delta_{\mathcal N_i}]}{\mu_{\mathcal N_i} + \delta_{\mathcal N_i},}{\label{SCP}}\notag
	\addConstraint{\forall(\hat x^z_i,\hat w^z_i)\in \mathbb R^{n_i}\!\times \mathbb R^{p_i}, \forall z\in\{1,\dots,\mathcal N_i\},}\notag
	\addConstraint{- \mathcal S_i(q_i,\hat x^z_i)< \mu_{\mathcal N_i},\label{SCP1}}
	\addConstraint{\mathcal S_i(q_i,f_i(\hat x^z_i,\hat w^z_i)) \!-\! \mathcal S_i(q_i,\hat x^z_i)}\notag
	\addConstraint{-\begin{bmatrix}
			\hat w^z_i\\
			\hat x^z_i
		\end{bmatrix}^\top\!\!
		\begin{bmatrix}
			\mathcal X_i^{11}&\mathcal X_i^{12}\\
			\mathcal X_i^{21}&\mathcal X_i^{22}
		\end{bmatrix}\!\begin{bmatrix}
			\hat w^z_i\\
			\hat x^z_i
		\end{bmatrix}\!\leq\! \mu_{\mathcal N_i},\label{SCP2}}
	\addConstraint{\begin{bmatrix}
			\hat w^z_i\\
			\hat x^z_i
		\end{bmatrix}^\top\!
		\begin{bmatrix}
			\mathcal X_i^{11}&\mathcal X_i^{12}\\
			\mathcal X_i^{21}&\mathcal X_i^{22}
		\end{bmatrix}\begin{bmatrix}
			\hat w^z_i\\
			\hat x^z_i
		\end{bmatrix} \!\leq \!\delta_{\mathcal N_i},\label{SCP3}}
	\addConstraint{g_i = [\mathcal X_i^{11};\mathcal X_i^{12};\mathcal X_i^{22};{q}_{i}^1;\dots;q_{i}^r],\!\!~\mu_{\mathcal N_i},\!\delta_{\mathcal N_i} \!\!\in\! \mathbb R.}\notag
\end{mini!}
It is evident that the proposed SCP in~\eqref{SCP} comprises a finite number of constraints of the same form as in~\eqref{RCP}. Let us denote the optimal value of SCP as $\mu_{\mathcal N_i}^* + \delta_{\mathcal N_i}^*$.

\begin{remark}
	While the unknown dynamics appear in the conditions of the SCP~\eqref{SCP}, they can be replaced using collected data. In fact, since the dynamics of each subsystem evolve recursively in discrete time, given an input-output data point, the input represents the pair of state and internal input (\emph{i.e.,} $(\hat x_i^z, \hat w_i^z)$), while the output corresponds to the unknown dynamics (\emph{i.e.,} $f(\hat x_i^z, \hat w_i^z)$).
\end{remark}

\begin{remark}
	Note that  $\mathcal{N}_{i}$ represents the number of trajectories with  arbitrary horizons, each of which allows the collection of one data pair $((\hat x_i^z, \hat w_i^z),f(\hat x_i^z, \hat w_i^z))$ from the $i$-th trajectory. 
\end{remark}

\begin{remark}
	Note that $\begin{bmatrix}
		\hat w^z_i\\
		\hat x^z_i
	\end{bmatrix}^\top\!\!\!\!
	\mathcal X_i^*\!\begin{bmatrix}
		\hat w^z_i\\
		\hat x^z_i
	\end{bmatrix}$ plays a key role in the compositionality condition we introduce in the following section, particularly in Theorem~\ref{Thm1}, which is based on data. This is the primary reason why we have incorporated an additional condition in~\eqref{RCP3} and~\eqref{SCP3}, where the objective of the optimization problems involves the minimization of both $\mu_{\mathcal N_i}$ and $\delta_{\mathcal N_i}$.
\end{remark}

In the next section, we solve the proposed SCP and construct the storage function of unknown $\Sigma_i$. Additionally, we formally verify the GAS property of unknown interconnected dt-NS by proposing some new compositionality conditions purely based on data. 

\section{GAS Certificate for Interconnected Network with Unknown Topology}\label{Guarantee_RCP}

Here, we assume that each unknown subsystem model $f_i$ is Lipschitz continuous with respect to $(x_i,w_i)$. Given that the storage function $\mathcal S_i$ is \emph{continuously differentiable} according to Assumption~\ref{Assm} and our analysis is on the unit sphere (bounded domain), one can readily conclude that $\mathcal S_i(q_i,f_i(x_i,w_i))^* - \mathcal S_i(q_i,x_i)^*$ is also Lipschitz continuous with respect to $(x_i,w_i)$ with a Lipschitz constant $\mathscr{L}^2_{i}$, where $\mathcal S_i(q_i,\cdot)^* = \mathcal S_i(q_i^*,\cdot)$. Similarly, under the same reasoning, one can readily show that $\mathcal S_i(q_i,x_i)^*$ is also Lipschitz continuous with respect to $x_i$, with a Lipschitz constant $\mathscr{L}^1_{i}$. Later on, we provide an algorithm to estimate Lipschitz constants $\mathscr{L}^1_{i},\mathscr{L}^2_{i}$ of unknown subsystems from data. We now propose the next theorem to offer formal GAS guarantee with correctness assurance over an unknown interconnected network with unknown topology.

\begin{theorem}\label{Thm1}
	Consider an interconnected dt-NS $\Sigma = \mathcal{I}(\Sigma_1,\ldots,\Sigma_M)$ composed of $M\in\mathbb{N}^+$ individual subsystems~$\Sigma_i$ with a fully unknown topology $\mathcal M$. Consider subsystems $\Sigma_i$ and their corresponding $\text{SCP}$ in~\eqref{SCP} with associated optimal value $\mu_{\mathcal N_i}^* + \delta_{\mathcal N_i}^*$ and optimal solutions $g_i^* = [\mathcal X_i^{11^*};\mathcal X_i^{12^*};\mathcal X_i^{22^*};{q}^*_{1_i};\dots;q^*_{r_i}]$ with $\mathcal N_i$ sampled data. If 
	\begin{subequations}
		\begin{align}\label{Con1}
			&\mu_{\mathcal N_i}^* \!+\! \mathscr{L}_{i}^1 \varepsilon_{i}  < 0, \quad \forall i\in\{1,\dots,M\},\\\label{Con2}
			&\sum_{i=1}^M\big(\mu_{\mathcal N_i}^* \!+\! \delta_{\mathcal N_i}^* \!+\! \mathscr{L}_{i}^2 \varepsilon_{i} \big) < 0,
		\end{align}
	\end{subequations}
	then  
	\begin{align}\label{newd}
		\mathcal V(q,x)= \sum_{i=1}^{M}  \mathcal S_i(q^*_i,x_i).
	\end{align}
	is a Lyapunov function for the interconnected system $\Sigma = \mathcal{I}(\Sigma_1,\ldots,\Sigma_M)$ satisfying conditions~\eqref{alpha0}-\eqref{alpha1}. Consequently, unknown interconnected dt-NS $\Sigma$ is GAS with correctness guarantees, according to Definition~\ref{GAS}.
\end{theorem}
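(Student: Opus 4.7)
The plan is to prove the two Lyapunov conditions~\eqref{alpha0}-\eqref{alpha1} on the unit sphere $\Vert x\Vert=1$ and then invoke the homogeneity of degree 2 of the storage functions (together with homogeneity of degree one of the $f_i$) to lift the conclusions globally to $\mathbb{R}^n$. The main device is to transfer each SCP constraint from the finite set of sampled points $(\hat{x}_i^z,\hat{w}_i^z)$ to every $(x_i,w_i)$ on the unit sphere via the Lipschitz constants $\mathscr{L}_i^1,\mathscr{L}_i^2$ and the covering radius $\varepsilon_i$ defined in~\eqref{EQ:41}. Throughout, I exploit that, per Remark~\ref{Rem:10}, choosing $w_i=[x_1;\ldots;x_{i-1};x_{i+1};\ldots;x_M]$ makes $\Vert(x_i,w_i)\Vert=1$ equivalent to $\Vert x\Vert=1$, so summing inequalities over $i$ is consistent with the fixed global state $x$.

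First I would establish~\eqref{alpha0}. By~\eqref{SCP1}, $-\mathcal{S}_i(q_i^*,\hat{x}_i^z)<\mu_{\mathcal{N}_i}^*$ for every $z$. Given any $(x_i,w_i)$ on the unit sphere, pick the nearest sampled point $(\hat{x}_i^{z^\star},\hat{w}_i^{z^\star})$; then by definition of $\varepsilon_i$ the distance is at most $\varepsilon_i$, and Lipschitz continuity of $\mathcal{S}_i(q_i^*,\cdot)$ with constant $\mathscr{L}_i^1$ yields
\[
-\mathcal{S}_i(q_i^*,x_i)\le -\mathcal{S}_i(q_i^*,\hat{x}_i^{z^\star})+\mathscr{L}_i^1\varepsilon_i<\mu_{\mathcal{N}_i}^*+\mathscr{L}_i^1\varepsilon_i<0,
\]
where the last inequality is~\eqref{Con1}. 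Hence each $\mathcal{S}_i(q_i^*,x_i)>0$ on the unit sphere, so $\mathcal{V}(q,x)>0$ there. By homogeneity of degree $2$, $\mathcal{S}_i(q_i^*,x_i)=\Vert x_i\Vert^2\,\mathcal{S}_i\!\bigl(q_i^*,x_i/\Vert x_i\Vert\bigr)$, so positivity extends to every $x\in\mathbb{R}^n\setminus\{0\}$, establishing~\eqref{alpha0}.

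Next I would establish~\eqref{alpha1}. Adding~\eqref{SCP2} and~\eqref{SCP3} at every sampled point gives
\[
\mathcal{S}_i(q_i^*,f_i(\hat{x}_i^z,\hat{w}_i^z))-\mathcal{S}_i(q_i^*,\hat{x}_i^z)\le \mu_{\mathcal{N}_i}^*+\delta_{\mathcal{N}_i}^*.
\]
For any $(x_i,w_i)$ on the unit sphere, Lipschitz continuity of the map $(x_i,w_i)\mapsto \mathcal{S}_i(q_i^*,f_i(x_i,w_i))-\mathcal{S}_i(q_i^*,x_i)$ with constant $\mathscr{L}_i^2$, applied to the nearest sampled pair, yields
\[
\mathcal{S}_i(q_i^*,f_i(x_i,w_i))-\mathcal{S}_i(q_i^*,x_i)\le \mu_{\mathcal{N}_i}^*+\delta_{\mathcal{N}_i}^*+\mathscr{L}_i^2\varepsilon_i.
\]
Setting $w_i$ as in Remark~\ref{Rem:10} and summing over $i\in\{1,\dots,M\}$, the left-hand side becomes $\mathcal{V}(q,f(x))-\mathcal{V}(q,x)$, and the right-hand side is strictly negative by~\eqref{Con2}. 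Thus $\mathcal{V}(q,f(x))<\mathcal{V}(q,x)$ on the unit sphere, and homogeneity of $f$ (degree one) together with homogeneity of $\mathcal{S}_i$ (degree two) yields~\eqref{alpha1} on all of $\mathbb{R}^n\setminus\{0\}$: for any $x\ne 0$, writing $x=\Vert x\Vert\bar{x}$ with $\Vert\bar{x}\Vert=1$, both sides scale by $\Vert x\Vert^2$. Combining~\eqref{alpha0}-\eqref{alpha1} and invoking Theorem~\ref{Thm:3} (or its classical Lyapunov consequence) concludes that $\Sigma$ is GAS in the sense of Definition~\ref{GAS}.

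The main obstacle I anticipate is the careful bookkeeping in the Lipschitz transfer step, particularly ensuring that the bound on the quadratic supply term $[w_i;x_i]^\top\mathcal{X}_i^*[w_i;x_i]\le\delta_{\mathcal{N}_i}^*$ genuinely propagates from the sampled pairs to every unit-sphere pair \emph{through the combined inequality} (so that a single constant $\mathscr{L}_i^2$ suffices, matching exactly what~\eqref{Con2} uses). Adding~\eqref{SCP2} and~\eqref{SCP3} \emph{before} extending, rather than extending each separately, circumvents the need for a separate Lipschitz constant for the supply rate. The other subtlety is making the equivalence $\Vert(x_i,w_i)\Vert=1\Leftrightarrow\Vert x\Vert=1$ rigorous: this is exactly the role of the specific choice of $w_i$ in Remark~\ref{Rem:10}, which must be invoked when summing subsystem-level inequalities to obtain the network-level Lyapunov decrease, and which notably sidesteps any use of the unknown coupling matrix $\mathcal{M}$.
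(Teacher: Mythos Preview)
Your proof is correct and follows essentially the same approach as the paper's: both transfer the SCP constraints from the sampled points to the full unit sphere via the Lipschitz constants $\mathscr{L}_i^1,\mathscr{L}_i^2$ and the covering radius $\varepsilon_i$, then invoke homogeneity to lift the conclusions globally. The paper's treatment of~\eqref{alpha1} is slightly less explicit about combining~\eqref{SCP2} with~\eqref{SCP3} before the Lipschitz extension (it cites only~\eqref{SCP2} when producing the $\mu_{\mathcal N_i}^*+\delta_{\mathcal N_i}^*$ bound), but the underlying argument is identical to yours.
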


\begin{proof}
	We first show that under condition~\eqref{Con1}, Lyapunov function $\mathcal V$ in~\eqref{Lyapunov} satisfies condition~\eqref{alpha0} over the unit sphere,
	\emph{i.e.,}
	\begin{align}\label{new2}
		\mathcal V(q,x) >0, \quad \quad {\forall x\in \mathbb R^n \backslash \{0 \}\!: \Vert x\Vert = 1.}
	\end{align}
	We write down $\mathcal V$ based on storage functions of individual subsystems as:
	\begin{align*}
		&-\mathcal V(q,x)= - \sum_{i=1}^{M}  \mathcal S_i(q^*_i,x_i).
	\end{align*}
	For given $x_i$ and with a slight abuse of notation, let
	$\bar z:= \arg \min_z \Vert x_i - \hat x^z_i\Vert.$ By incorporating the terms $\sum_{i=1}^{M}\mathcal S_i(q_i,\hat x^{\bar z}_i)^*$ through addition and subtraction, with $\mathcal S_i(q_i,\cdot)^* = \mathcal S_i(q_i^*,\cdot),$ one has
	\begin{align*}
		- &\mathcal V(x,q)= \sum_{i=1}^{M} \big(\!- \mathcal S_i(q^*_i,x_i) \!+\! \mathcal S_i(q_i,\hat x^{\bar z}_i)^*\!-\! \mathcal S_i(q_i,\hat x^{\bar z}_i)^*\big).
	\end{align*}
	Given that $\mathcal S_i(q_i,x_i)^*$ is Lipschitz continuous with respect to $x_i$ with a Lipschitz constant $\mathscr{L}^1_{i}$, and since
	\begin{align*}
		\underset{x_i: ~\!\|x_i\| \leq 1}\max \| x_i - \hat{x}_i^z \| &= \underset{(x_i,w_i): ~\!\|(x_i,w_i)\|=1}\max \| x_i - \hat{x}_i^z \|, \\
		\| x_i - \hat{x}_i^z \| &\leq \| (x_i,w_i) - (\hat{x}_i^z ,\hat{w}_i^z) \|,
	\end{align*}
	we have
	\begin{align*}
		- \mathcal V(q,x)&\leq \sum_{i=1}^{M} \big(\mathscr{L}^1_{i}\min_z\Vert x_i - \hat x^z_i\Vert\!-\! \mathcal S_i(q_i,\hat x^{\bar z}_i)^*\big)\\ 
		&\leq \sum_{i=1}^{M}\!\big(\mathscr{L}^1_{i}\!\max_{x_i: ~\!\|x_i\| \leq 1}\!\min_z\!\Vert x_i-\hat x^z_i \Vert \!-\! \mathcal S_i(q_i,\hat x^{\bar z}_i)^*\big)\\
		& \leq \sum_{i=1}^{M}\big(\mathscr{L}_{i}^1 \varepsilon_i - \mathcal S_i(q_i,\hat x^{\bar z}_i)^*\big).
	\end{align*}
	According to condition~\eqref{SCP1} of SCP, we have
	\begin{align}\label{new90}
		- \mathcal V(q,x)\leq \sum_{i=1}^{M}\big(\mu_{\mathcal N_i}^* + \mathscr{L}^1_{i} \varepsilon_{i}\big).
	\end{align}
	Given our proposed condition in~\eqref{Con1}, one can conclude that
	\begin{align*}
		\mathcal V(q,x)> 0,  \quad {\forall x\in \mathbb R^n \backslash \{0 \}\!: \Vert x\Vert = 1.}
	\end{align*}
	Given that $\mathcal V$ in \eqref{newd} is based on the linear combination of $\mathcal S_i({q_i},x_i)$ and according to the inequality obtained in~\eqref{new90}, $\mu_{R_i}^*$ in RCP \eqref{RCP} can be defined as $\mu_{R_i}^*= \mu_{\mathcal N_i}^* + \mathscr{L}_{i}^1 \varepsilon_{i}$. This expression is negative for all subsystems $M$ under the condition in~\eqref{Con1}.
	
	We proceed with showing that under condition~\eqref{Con2}, $\mathcal V$ in the form of~\eqref{Lyapunov} satisfies condition~\eqref{alpha1}, as well, over the unit sphere, \emph{i.e.,}
	\begin{align*}
		\mathcal V(q,f(x)) - \mathcal V(q,x) < 0, \quad \quad \quad {\forall x\in \mathbb R^n\!\!: \Vert x\Vert = 1.}
	\end{align*}
	Given the proposed Lyapunov function in~\eqref{Lyapunov}, one has
	\begin{align*}
		\mathcal V(q,f(x)) - \mathcal V(q,x) = \sum_{i=1}^{M}\big(\mathcal S_i(q^*_i,f_i(x_i,w_i)) -  \mathcal S_i(q^*_i,x_i)\big)
	\end{align*}
	which is well-posed over the unit sphere since $\Vert x\Vert  =1 \Leftrightarrow  \Vert (x_i, w_i)\Vert =1$ (cf. Remark~\ref{Rem:10}).
	For given $(x_i,w_i)$ and with a slight abuse of notation, let
	$\bar z:= \arg \min_z \Vert (x_i,w_i) - (\hat x_i^{z},\hat w_i^{z}) \Vert$.  By incorporating the terms $\sum_{i=1}^{M}(\mathcal S_i(q_i,f_i(q_i,\hat x_i^{\bar z},\hat w_i^{\bar z}))^* - \mathcal S_i(q_i,\hat x_i^{\bar z})^*)$ through addition and subtraction, one has
	\begin{align*}
		\mathcal V(q,f(x)) - \mathcal V(q,x) &= \sum_{i=1}^{M}\big(\mathcal S_i(q^*_i,f_i(x_i,w_i)) -  \mathcal S_i(q^*_i,x_i)\\
		& -(\mathcal S_i(q_i,f_i(\hat x^{\bar z}_i,\hat w^{\bar z}_i))^* - \mathcal S_i(q_i,\hat x^{\bar z}_i)^* )\\
		&+~ \mathcal S_i(q_i,f_i(\hat x^{\bar z}_i,\hat w^{\bar z}_i))^* - \mathcal S_i(q_i,\hat x^{\bar z}_i)^*\big).
	\end{align*}
	Given that $\mathcal S_i(q_i,f_i(x_i,w_i))^* - \mathcal S_i(q_i,x_i)^*$ is Lipschitz continuous with respect to $(x_i,w_i)$ with a Lipschitz constant $\mathscr{L}_{i}^2$, we have
	\begin{align*}
		\mathcal V&(q,f(x)) - \mathcal V(q,x) \\
		&\leq \sum_{i=1}^{M}\big(\mathscr{L}_{i}^2 \, \min_z\Vert (x_i,w_i) - (\hat x^{z}_i,\hat w^{z}_i)\Vert\\
		&~~~+ \mathcal S_i(q_i,f_i(\hat x^{\bar z}_i,\hat w^{\bar z}_i))^* - \mathcal S_i(q_i,\hat x^{\bar z}_i)^*\big)\\
		&\leq \mathscr{L}_{i}^2 \, \max_{(x_i,w_i)}\min_z\Vert (x_i,w_i) - (\hat x^{z}_i,\hat w^{z}_i)\Vert\\
		&~~~+ \mathcal S_i(q_i,f_i(\hat x^{\bar z}_i,\hat w^{\bar z}_i))^* - \mathcal S_i(q_i,\hat x^{\bar z}_i)^*\big)\\
		& = \sum_{i=1}^{M}\big(\mathscr{L}_{i}^2 \varepsilon_i + \mathcal S_i(q_i,f_i(\hat x^{\bar z}_i,\hat w^{\bar z}_i))^* - \mathcal S_i(q_i,\hat x^{\bar z}_i)^*\big).
	\end{align*}
	According to condition~\eqref{SCP2} of SCP, we have
	\begin{align*}
		\mathcal V(q,f(x)) - \mathcal V(q,x)&\leq \sum_{i=1}^M\big(\mu_{\mathcal N_i}^* + \delta_{\mathcal N_i}^* + \mathscr{L}_{i}^2 \varepsilon_{i} \big).
	\end{align*}
	Given our proposed condition in~\eqref{Con2}, one can conclude that
	\begin{align*}
		\mathcal V(q,f(x)) - \mathcal V(q,x) < 0, \quad \quad {\forall x\in \mathbb R^n\!\!: \Vert x \Vert = 1.}
	\end{align*}
	Then the unknown interconnected network is GAS with correctness guarantees over the unit sphere. Given the homogeneity property of interconnected network and the constructed Lyapunov function, the GAS certificate can then be globally guaranteed in $\mathbb R^{n}$, which concludes the proof.
\end{proof}

\begin{remark}
	The proposed conditions~\eqref{Con1}-\eqref{Con2} introduce a novel notion of compositionality in the dissipativity setting using data, which are proposed in our work for the first time. Given that $\hat w^z_i$ captures the interaction effect between subsystems in the unknown interconnected topology, condition~\eqref{Con2} allows supply rates of subsystems to be designed in a way that they can compensate the potential undesirable effect of neighboring subsystems while this condition becomes negative. Note that compositionality conditions~\eqref{Con1}-\eqref{Con2} do not need to be verified if $\mu_{\mathcal N_i}^* + \mathscr{L}^1_{i} \varepsilon_{i}$ and $\mu_{\mathcal N_i}^* + \delta_{\mathcal N_i}^* + \mathscr{L}^2_{i} \varepsilon_{i}$ can become negative while solving SCP for each individual subsystem. In this case, the results of the current work can be extended to the stability analysis of unknown networks with an \emph{arbitrary, a-priori undefined number of subsystems}.
\end{remark}

\begin{algorithm}[t]
	\caption{Estimation of $\mathscr{L}^1_i, \mathscr{L}^2_i$ via data}
	\label{Alg:1}		
	\begin{center}
		\begin{algorithmic}[1]
			\REQUIRE $\mathcal S_i(q_i,\hat x^z_i)^*, \mathcal S_i(q_i,f_i(\hat x^z_i,\hat w^z_i))^*$
			\STATE Choose $\rho, \sigma \in \mathbb N^+$ and $\alpha \in \mathbb R^+$
			\STATE {\bf for} $i = 1\!:\!\rho$
			\STATE ~~~~~Select sampled pairs $(\hat x_{i}^z,\hat w_{i}^z), (\hat x_{i}^{z'},\hat w_{i}^{z'})$ from\\ ~~~~ $\mathbb R^{n_i}\times \mathbb R^{p_i}$ such that $\Vert (\hat x_{i}^z,\hat w_{i}^z) - (\hat x_{i}^{z'},\hat w_{i}^{z'}) \Vert \leq \alpha$ 
			\STATE ~~~~~Compute the slope $\theta_i$ as $$ \!\!\!\theta_i = \frac{\Vert h_i(\hat x_{i}^z,\hat w_{i}^z) -h_i(\hat x_{i}^{z'},\hat w_{i}^{z'}) \Vert}{\Vert (\hat x_{i}^z,\hat w_{i}^z) - (\hat x_{i}^{z'},\hat w_{i}^{z'}) \Vert}$$ ~~~~~with $h_i(\hat x_{i}^z,\hat w_{i}^z) = \mathcal S_i(q_i,f_i(\hat x^z_i,\hat w^z_i))^* \!-\! \mathcal S_i(q_i,\hat x^z_i)^*$
			\STATE {\bf end}
			\STATE Compute the maximum slope as $$\Theta = \max \{\theta_1,\dots,\theta_{\rho}\}$$
			\STATE Repeat Steps $2$-$6$, $\sigma$ times and acquire $\Theta_1, \dots, \Theta_\sigma$
			\STATE Apply \emph{Reverse Weibull distribution}~\cite{wood1996estimation} to $\Theta_1, \dots, \Theta_\sigma$ which provides us with, so-called, location, scale, and shape parameters
			\STATE The obtained \emph{location parameter} is the estimated $\mathscr{L}^2_{i}$
			\STATE Repeat Steps $1$-$9$ with $h_i = \mathcal S_i(q_i,\hat x^{z}_i)^*$ to estimate $\mathscr{L}_{i}^1$
			\ENSURE $\mathscr{L}^1_i, \mathscr{L}^2_i$
		\end{algorithmic}
	\end{center}
\end{algorithm}	

In order to verify conditions~\eqref{Con1}-\eqref{Con2} in Theorem~\ref{Thm1}, the computation of $\mathscr{L}^1_i, \mathscr{L}^2_i$ is a prerequisite. To achieve this, we introduce Algorithm~\ref{Alg:1}, which enables the estimation of $\mathscr{L}^1_i, \mathscr{L}^2_i$ using a finite set of data. Within this algorithm, we rely on Lemma~\ref{lemma}, adopted from~\cite{wood1996estimation}, to guarantee the convergence of the estimated $\mathscr{L}^1_i, \mathscr{L}^2_i$ towards their actual values.

\begin{lemma}\label{lemma}
	Under Algorithm~\ref{Alg:1}, the estimated $\mathscr{L}^1_i, \mathscr{L}^2_i$ converge to their actual values if and only if $\alpha$ approaches zero, and both $\rho$ and $\sigma$ go to infinity.
\end{lemma}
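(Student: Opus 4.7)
The plan is to follow the extreme value statistics framework of \cite{wood1996estimation}, identifying the precise role played by each of $\alpha$, $\rho$, and $\sigma$ in making the estimator consistent. Sufficiency splits naturally into three convergence claims. First, as $\alpha \to 0$, every admissible pair $((\hat x_i^z,\hat w_i^z),(\hat x_i^{z'},\hat w_i^{z'}))$ satisfies $\|(\hat x_i^z,\hat w_i^z) - (\hat x_i^{z'},\hat w_i^{z'})\| \leq \alpha$, so the finite-difference slope $\theta_i$ tends to a directional derivative bound of $h_i$. Because $h_i$ is continuously differentiable on the compact unit sphere, the supremum of such slopes over all admissible pairs coincides with the true Lipschitz constant $\mathscr{L}_i^2$ (respectively $\mathscr{L}_i^1$ for Step 10). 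Hence in the $\alpha \to 0$ limit the quantity the algorithm must estimate is exactly $\mathscr{L}_i^j$.

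Second, for fixed $\alpha$ and inside each of the $\sigma$ outer repetitions, the algorithm draws $\rho$ slopes and stores $\Theta = \max\{\theta_1,\dots,\theta_\rho\}$. Since the $\theta_i$ are i.i.d. samples from a distribution with bounded support (Lipschitz continuity gives a finite upper endpoint equal to the essential supremum of slopes), the Fisher--Tippett--Gnedenko theorem applies with the distribution lying in the domain of attraction of the Reverse Weibull law. Consequently, as $\rho \to \infty$, suitably affinely normalized $\Theta$ converges in distribution to a Reverse Weibull whose \emph{location parameter} equals precisely the essential supremum of slopes in the batch. Third, treating $\Theta_1,\dots,\Theta_\sigma$ as an i.i.d. sample from this limit law, the maximum-likelihood estimates of its location, scale, and shape parameters are consistent by standard M-estimation theory as $\sigma \to \infty$, so the reported location estimate converges almost surely to the true upper endpoint. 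Chaining these three limits yields $\mathscr{L}_i^j$.

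For necessity, I would exhibit a simple failure mode for each parameter held finite. If $\alpha$ is bounded away from zero, a smooth $h_i$ whose gradient peaks only in a narrow region of the sphere gives admissible pairs that systematically under-sample the peak slope, producing a strictly smaller asymptotic estimate. If $\rho$ stays finite, the batch maximum is stochastically dominated by the true supremum with a strict gap (the probability mass near the endpoint has zero Lebesgue content for nondegenerate distributions), so the Reverse Weibull location parameter estimated from $\Theta_1,\dots,\Theta_\sigma$ is strictly below $\mathscr{L}_i^j$ even as $\sigma \to \infty$. If $\sigma$ stays finite, no finite sample of a continuous distribution can pin down its location parameter almost surely. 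In each case one constructs an explicit example where the estimate fails to converge to $\mathscr{L}_i^j$.

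The main obstacle I expect is verifying the Reverse Weibull domain-of-attraction hypothesis rigorously: one must check that the distribution of $\theta_i$ has the right tail behaviour near its upper endpoint, which amounts to a regular-variation condition on the density of slopes close to $\mathscr{L}_i^j$. This follows from the $C^1$ (and on the bounded unit sphere, effectively $C^\infty$ after polynomial ansatz) structure of $\mathcal{S}_i$ and $f_i$ combined with uniform sampling, but the density calculation around the argmax of the slope function is the delicate step. Once that is in hand, the rest is routine extreme value statistics and M-estimation, so the proof reduces to invoking the machinery of \cite{wood1996estimation} after justifying its preconditions in our setting.
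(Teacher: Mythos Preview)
The paper does not actually prove this lemma: it is introduced as ``adopted from~\cite{wood1996estimation}'' and stated without argument, followed immediately by remarks. There is therefore no paper proof to compare against; your proposal supplies what the authors deliberately outsourced to the cited reference.

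On the substance of your attempt: the sufficiency outline (finite differences $\to$ directional derivatives as $\alpha\to0$; Fisher--Tippett--Gnedenko placing bounded-support slope maxima in the Reverse Weibull domain as $\rho\to\infty$; MLE consistency for the location parameter as $\sigma\to\infty$) is the standard extreme-value route and matches what \cite{wood1996estimation} does. Two of your necessity arguments are shaky, however. For $\alpha$ bounded away from zero, note that the supremum of secant slopes over \emph{all} pairs at distance $\le\alpha$ still equals the Lipschitz constant for a $C^1$ function on a compact set (pairs arbitrarily close to the maximiser of $\|\nabla h_i\|$ are admissible for any $\alpha>0$); the bias, if any, must come from the \emph{tail shape} of the slope distribution failing the regular-variation hypothesis of the Reverse Weibull fit, not from under-sampling the peak as you suggest. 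For $\rho$ finite, the phrase ``stochastically dominated by the true supremum with a strict gap'' is not the right mechanism either: the upper endpoint of the distribution of $\Theta$ is still $\mathscr{L}_i^j$ for any finite $\rho$, so the failure is one of model misspecification (the finite-$\rho$ law of $\Theta$ is not Reverse Weibull, hence the fitted location is biased), which you should state explicitly. The domain-of-attraction verification you flag as the delicate step is indeed the crux, and is precisely what \cite{wood1996estimation} handles under its own regularity assumptions.
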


\begin{remark}
	While the results of Lemma~\ref{lemma} provide estimation guarantees in the limit, the amount of data required for Lipschitz constant estimation in practice does not necessarily need to approach infinity. Specifically, in some scenarios, we observed that the estimated Lipschitz constant remains unchanged after a certain number of data points, as it matches the exact Lipschitz constant derived from the mathematical model.
\end{remark}

\begin{remark}
	Given that unknown coefficients $q_i$ are required for estimating the Lipschitz constants $\mathscr{L}^1_i, \mathscr{L}^2_i$ in Algorithm~\ref{Alg:1}, and subsequently verifying conditions~\eqref{Con1}-\eqref{Con2}, it is imperative to first solve the proposed SCP in~\eqref{SCP}. To avoid a posteriori verification of conditions~\eqref{Con1}-\eqref{Con2}, one can initially consider unknown coefficients of $q_i$ to exist within a certain range and estimate the Lipschitz constants $\mathscr{L}^1_i$ and $\mathscr{L}^2_i$ before solving the SCP. Subsequently, the underlying ranges should be enforced during the solution of SCP. For instance, one could assume $-2 \leq q_i \leq 2$ and estimate the Lipschitz constants $\mathscr L_i^1$ and $\mathscr L_i^2$ based on any $q_i$ within that range. Then, when solving the SCP in~\eqref{SCP}, this range should be enforced as an additional constraint to ensure that $q_i$ remains within the specified range. This ensures that the a-priori computation of the Lipschitz constants based on unknown $q_i$ is valid.
\end{remark}

\begin{remark}
	Note that while the interconnection topology is unknown,  the full-state measurement of the subsystems within $\hat w_i^z$ should be used for estimating the Lipschitz constants $\mathscr{L}^1_i$ and $\mathscr{L}^2_i$, \emph{i.e.,} $\hat w_i^z=[{\hat x_1^z;\ldots;\hat x_{i-1}^z;\hat x_{i+1}^z;\ldots;\hat x_{M}^z}]$ (cf. Remark~\ref{Rem:10}).
\end{remark}

We now propose Algorithm~\ref{Alg:2} to describe the required steps for ensuring the GAS certificate over unknown interconnected dt-NS with an unknown topology.

\begin{remark}
	The proposed data-driven findings can be generalized by assuming that each unknown subsystem is noisy, treating the stochasticity as part of the unknown dynamics. In this case, one can explore the probabilistic stability of the system by introducing an expected value in the second condition of the Lyapunov function in~\eqref{alpha1} and the storage function in~\eqref{Eq:8_2} for the individual subsystems. Since the expected value must be first approximated empirically using Chebyshev's inequality~\cite{saw1984chebyshev} and then used in the SCP in~\eqref{SCP}, the final guarantee comes with a confidence level that has a closed-form relationship with the required number of noise realizations for the empirical computation.
\end{remark}

\begin{remark}
	Note that Algorithm~\ref{Alg:1} is specifically for the estimation of the Lipschitz constant and always terminates after completing the required steps. Algorithm~\ref{Alg:2} is a \emph{pseudocode} outlining the required steps for verifying the GAS property over the interconnected network. For instance, if the network is GAS, the algorithm will break and return this outcome. However, Step 6 of the algorithm is a refinement process aimed at verifying the GAS property of the network using some potential refinements. It is important to note that since Lyapunov functions provide sufficient (not necessary) conditions for nonlinear systems, if Algorithm~\ref{Alg:2} cannot successfully terminated, this does not imply that the network is unstable.
\end{remark}

\begin{algorithm}[t]
	\caption{GAS certificate of interconnected dt-NS with fully unknown topology}
	\label{Alg:2}		
	\begin{center}
		\begin{algorithmic}[1]
			\REQUIRE Collect data-points $((\tilde x^z_i,\tilde w^z_i),f_i(\tilde x^z_i,\tilde w^z_i)), z\in\{1,\dots,\mathcal N_i\}$
			\STATE Project all data-points onto unit sphere by normalizing them as $((\hat x^z_i,\hat w^z_i),f_i(\hat x^z_i,\hat w^z_i)) = \frac{((\tilde x^z_i,\tilde w^z_i),f_i(\tilde x^z_i,\tilde w^z_i))}{\Vert (\tilde x^z_i,\tilde w^z_i)\Vert}$
			\STATE Compute Lipschitz constants $\mathscr{L}^1_i, \mathscr{L}^2_i$ according to Algorithm~\ref{Alg:1} 
			\STATE Compute $\varepsilon_{i}$ as the maximum distance between any points in the unit sphere and the set of data points, \emph{i.e.,} $
			\varepsilon_{i} = \underset{(x_i,w_i)}\max\underset{z}\min\Vert (x_i,w_i) - (\hat x^z_i,\hat w^z_i)\Vert,~ \forall (x_i,w_i)\in \mathbb R^{n_i}\times \mathbb R^{p_i}\!: \Vert (x_i,w_i)\Vert = 1$
			\STATE Solve $\text{SCP}$ in~\eqref{SCP} with the normalized data and obtain $\mu^*_{\mathcal N_i}$ and $\delta_{\mathcal N_i}^*$ 
			\STATE If $\mu_{\mathcal N_i}^* + \mathscr{L}^1_{i} \varepsilon_{i}  < 0, \forall i\in\{1,\dots,M\},$ and $\sum_{i=1}^M\big(\mu_{\mathcal N_i}^* + \delta_{\mathcal N_i}^* + \mathscr{L}_{i}^2 \varepsilon_{i}  \big) < 0$, then unknown interconnected dt-NS is GAS with Lyapunov function $\mathcal V(q,x) := \sum_{i=1}^{M}\mathcal S_i(q_i,x_i)$
			\STATE Otherwise, repeat Steps 2-5 with additional collected data $\mathcal N_i$ to reduce $\varepsilon_{i}$, or increase the terms of the homogeneous basis functions, potentially designing other $\mu^*_{\mathcal N_i}$ and $\delta_{\mathcal N_i}^*$ that satisfy conditions~\eqref{Con1}-\eqref{Con2}
		\end{algorithmic}
	\end{center}
\end{algorithm}

\section{Unknown Interconnected Networks with Known topology}\label{ADMM}

In the scenario where subsystems are unknown but the interconnection topology $\mathcal M$ is known, we provide here our sub-result where we initially construct a storage function for each unknown subsystem based on data. This is achieved by solving an SCP for each unknown subsystem, ensuring a guaranteed confidence level of 1 over the SF construction. In this case, we show that the data-driven conditions~\eqref{Con1}-\eqref{Con2} are reduced to $\mu_{\mathcal N_i}^* \!+\! \mathscr{L}_{i} \varepsilon_{i} < 0$ with $\mathscr{L}_{i} = \max\{\mathscr{L}^1_{i},\mathscr{L}^2_{i}\}$ in the level of subsystems. Subsequently, we examine the traditional dissipativity compositionality condition~\eqref{EQQ:2} to determine its satisfaction. To avoid potential posteriori checks for this condition, we employ an alternating direction method of multipliers (ADMM) algorithm~\cite{boyd2011distributed}, enabling efficient satisfaction of the compositionality condition in a \emph{distributed manner} while searching for storage functions of individual subsystems. 

We initially propose a relaxed alternative scenario convex program (SCP) that does not necessitate condition~\eqref{SCP3}, outlined as follows:

\begin{mini!}|s|[2]
	{[g_i;\mu_{\mathcal N_i}]}{\mu_{\mathcal N_i},}{\label{SCP-1}}\notag
	\addConstraint{\forall(\hat x^z_i,\hat w^z_i)\in \mathbb R^{n_i}\!\times \mathbb R^{p_i}, \forall z\in\{1,\dots,\mathcal N_i\},}\notag
	\addConstraint{- \mathcal S_i(q_i,\hat x^z_i)< \mu_{\mathcal N_i},\label{SCP1-1}}
	\addConstraint{\mathcal S_i(q_i,f_i(\hat x^z_i,\hat w^z_i)) \!-\! \mathcal S_i(q_i,\hat x^z_i)}\notag
	\addConstraint{-\begin{bmatrix}
			\hat w^z_i\\
			\hat x^z_i
		\end{bmatrix}^\top\!
		\begin{bmatrix}
			\mathcal X_i^{11}&\mathcal X_i^{12}\\
			\mathcal X_i^{21}&\mathcal X_i^{22}
		\end{bmatrix}\begin{bmatrix}
			\hat w^z_i\\
			\hat x^z_i
		\end{bmatrix}\leq \mu_{\mathcal N_i},\label{SCP2-1}}
	\addConstraint{g_i = [\mathcal X_i^{11};\mathcal X_i^{12};\mathcal X_i^{22};{q}_{i}^1;\dots;q_{i}^r],~ \mu_{\mathcal N_i} \in \mathbb R.}\notag
\end{mini!}

\begin{remark}
	Note that in this section, it is not required to consider the entire state measurement of subsystems within 
	$w_i$, as described in Remark~\ref{Rem:10}, during the analysis. In fact, the primary objective here is to demonstrate that each subsystem individually exhibits dissipativity \emph{over the unit sphere} with respect to a supply rate via a storage function $\mathcal{S}_i$, as specified in the SCP~\eqref{SCP-1}. Subsequently, this result can be extended globally to $\mathbb{R}^{n_i}$ by leveraging the \emph{homogeneity property} of the maps $f_i$ and storage functions $\mathcal{S}_i$. Ultimately, using the dissipativity compositional condition~~\eqref{EQQ:2} and incorporating the knowledge of the interconnection matrix $\mathcal{M}$, the GAS certificate is ensured for the entire network over $\mathbb{R}^n$.
\end{remark}

We now raise the following theorem, enabling the construction of a storage function for each unknown subsystem through solving the SCP in~\eqref{SCP-1}, while ensuring guaranteed confidence based on collected data.

\begin{theorem}\label{Thm2}
	Consider individual subsystems in~\eqref{Eq:2} and their corresponding $\text{SCP}$ in~\eqref{SCP-1} with associated optimal value $\mu_{\mathcal N_i}^*$ and solutions $g_i^* = [\mathcal X_i^{11^*};\mathcal X_i^{12^*};\mathcal X_i^{22^*};{q}^*_{1_i};\dots;q^*_{r_i}]$ with $\mathcal N_i$ sampled data. If 
	\begin{align}\label{Con3}
		\mu_{\mathcal N_i}^* \!+\! \mathscr{L}_{i} \varepsilon_{i} < 0,
	\end{align}
	with $\mathscr{L}_{i} = \max\{\mathscr{L}^1_{i},\mathscr{L}^2_{i}\}$, then the constructed $\mathcal{S}_i$ obtained by solving SCP~\eqref{SCP-1} serve as storage functions for unknown subsystems $\Sigma_i$ globally over $\mathbb{R}^{n_i}$, with  guarantees of correctness.
\end{theorem}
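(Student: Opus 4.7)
The plan is to mirror the structure of the proof of Theorem~\ref{Thm1} but at the level of each individual subsystem $\Sigma_i$, and then invoke homogeneity to lift the conclusions from the unit sphere to all of $\mathbb R^{n_i}\times\mathbb R^{p_i}$. Two ingredients must be verified for the optimizer $\mathcal S_i(q_i,\cdot)^*$ to qualify as a storage function in the sense of Definition~\ref{Def:4}: strict positivity away from the origin, and the dissipation inequality~\eqref{Eq:8_2} against the supply rate encoded by $\mathcal X_i^*$. The SCP~\eqref{SCP-1} enforces these only at the sampled pairs $(\hat x_i^z,\hat w_i^z)$, so the crux is to convert the sample-wise inequalities into statements valid on the full unit sphere via Lipschitz bounds, and then extend globally using degree-one homogeneity of $f_i$ and degree-two homogeneity of $\mathcal S_i$.

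First, I would establish positive definiteness on the unit sphere. For any $(x_i,w_i)$ with $\Vert(x_i,w_i)\Vert=1$, set $\bar z := \arg\min_z \Vert(x_i,w_i)-(\hat x_i^z,\hat w_i^z)\Vert$; since $\Vert x_i-\hat x_i^{\bar z}\Vert \leq \Vert(x_i,w_i)-(\hat x_i^{\bar z},\hat w_i^{\bar z})\Vert \leq \varepsilon_i$ and $\mathcal S_i(q_i,x_i)^*$ is $\mathscr{L}_i^1$-Lipschitz in $x_i$, the usual add-and-subtract trick together with~\eqref{SCP1-1} and $\mathscr{L}_i^1\leq\mathscr{L}_i$ yields $-\mathcal S_i(q_i,x_i)^* \leq \mu_{\mathcal N_i}^* + \mathscr{L}_i\varepsilon_i < 0$ by~\eqref{Con3}. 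Specializing to $(x_i,w_i)=(y,0)$ with $\Vert y\Vert=1$ in particular gives $\mathcal S_i(q_i,y)^*>0$ on the unit sphere of $\mathbb R^{n_i}$.

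Second, I would handle the dissipation inequality analogously. With the same $\bar z$, add and subtract $\mathcal S_i(q_i,f_i(\hat x_i^{\bar z},\hat w_i^{\bar z}))^*-\mathcal S_i(q_i,\hat x_i^{\bar z})^*$, apply Lipschitz continuity of the map $(x_i,w_i)\mapsto \mathcal S_i(q_i,f_i(x_i,w_i))^*-\mathcal S_i(q_i,x_i)^*$ with constant $\mathscr{L}_i^2$, and invoke~\eqref{SCP2-1} to reach
\begin{align*}
\mathcal S_i(f_i(x_i,w_i))-\mathcal S_i(x_i) - \begin{bmatrix}w_i\\x_i\end{bmatrix}^\top \mathcal X_i^* \begin{bmatrix}w_i\\x_i\end{bmatrix} \leq \mu_{\mathcal N_i}^* + \mathscr{L}_i\varepsilon_i < 0
\end{align*}
for every $(x_i,w_i)$ on the unit sphere of $\mathbb R^{n_i}\times\mathbb R^{p_i}$.

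The last step is to globalize. For any nonzero $(x_i,w_i)$, set $\lambda:=\Vert(x_i,w_i)\Vert$ and $(\tilde x_i,\tilde w_i):=(x_i,w_i)/\lambda$; because $\mathcal S_i$ is degree-two homogeneous and $f_i$ degree-one homogeneous, both $\mathcal S_i(x_i)$ and $\mathcal S_i(f_i(x_i,w_i))$ scale exactly as $\lambda^2$, and so does the quadratic form associated with $\mathcal X_i^*$, so the dissipation inequality on the unit sphere carries verbatim to $(x_i,w_i)$; likewise, $\mathcal S_i(x_i)=\lambda^2\mathcal S_i(\tilde x_i)>0$ whenever $x_i\neq 0$, with $\mathcal S_i(0)=0$ forced by homogeneity. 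The step I expect to require the most care is the interplay between the two distinct Lipschitz estimates inside the single scalar condition~\eqref{Con3}: the hypothesis $\mathscr{L}_i=\max\{\mathscr{L}_i^1,\mathscr{L}_i^2\}$ is exactly what allows the two sample-to-sphere extensions, which appear as the separate conditions~\eqref{Con1}-\eqref{Con2} in Theorem~\ref{Thm1}, to be folded into one scalar inequality without loss of tightness.
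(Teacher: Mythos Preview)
Your outline is close to the paper's argument, but there is a genuine gap in your treatment of the dissipation inequality. You propose to add and subtract only $\mathcal S_i(q_i,f_i(\hat x_i^{\bar z},\hat w_i^{\bar z}))^*-\mathcal S_i(q_i,\hat x_i^{\bar z})^*$ and to use the Lipschitz constant $\mathscr{L}_i^2$ of the map $(x_i,w_i)\mapsto \mathcal S_i(q_i,f_i(x_i,w_i))^*-\mathcal S_i(q_i,x_i)^*$. Carrying this through and invoking~\eqref{SCP2-1} gives
\[
\mathcal S_i(f_i(x_i,w_i))-\mathcal S_i(x_i)\;\le\;\mu_{\mathcal N_i}^*+\mathscr{L}_i^2\varepsilon_i+\begin{bmatrix}\hat w_i^{\bar z}\\\hat x_i^{\bar z}\end{bmatrix}^{\!\top}\!\!\mathcal X_i^*\begin{bmatrix}\hat w_i^{\bar z}\\\hat x_i^{\bar z}\end{bmatrix},
\]
so after subtracting the supply rate at $(x_i,w_i)$ you are left with the residual $\begin{bmatrix}\hat w_i^{\bar z}\\\hat x_i^{\bar z}\end{bmatrix}^{\!\top}\!\mathcal X_i^*\begin{bmatrix}\hat w_i^{\bar z}\\\hat x_i^{\bar z}\end{bmatrix}-\begin{bmatrix}w_i\\x_i\end{bmatrix}^{\!\top}\!\mathcal X_i^*\begin{bmatrix}w_i\\x_i\end{bmatrix}$, which your bound does not control. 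In Theorem~\ref{Thm1} this residual never appears because there the supply rate is eliminated via the separate constraint~\eqref{SCP3} and the $\delta_{\mathcal N_i}^*$ term; that mechanism is precisely what has been dropped in SCP~\eqref{SCP-1}, so mirroring the proof of Theorem~\ref{Thm1} verbatim does not close the argument here.

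The paper's fix is to add and subtract the \emph{entire} left-hand side of~\eqref{N12} evaluated at the nearest sample, namely $\mathcal S_i(q_i,f_i(\hat x_i^{\bar z},\hat w_i^{\bar z}))^*-\mathcal S_i(q_i,\hat x_i^{\bar z})^*-\begin{bmatrix}\hat w_i^{\bar z}\\\hat x_i^{\bar z}\end{bmatrix}^{\!\top}\!\mathcal X_i^*\begin{bmatrix}\hat w_i^{\bar z}\\\hat x_i^{\bar z}\end{bmatrix}$, and to take $\mathscr{L}_i^2$ (with an explicit ``slight abuse of notation'') to be the Lipschitz constant of the full map $(x_i,w_i)\mapsto \mathcal S_i(f_i(x_i,w_i))-\mathcal S_i(x_i)-\begin{bmatrix}w_i\\x_i\end{bmatrix}^{\!\top}\!\mathcal X_i^*\begin{bmatrix}w_i\\x_i\end{bmatrix}$, which now also absorbs the quadratic form. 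With that redefinition the chain of inequalities produces exactly $\mu_{\mathcal N_i}^*+\mathscr{L}_i\varepsilon_i<0$ with no leftover supply-rate discrepancy. Your positive-definiteness step and the homogeneity lift to $\mathbb R^{n_i}\times\mathbb R^{p_i}$ are fine and match the paper; the only repair needed is to fold the supply-rate term into the Lipschitz estimate for the dissipation step.
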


\begin{proof}
	Initially, we demonstrate that the constructed $\mathcal{S}_i$ obtained by solving SCP~\eqref{SCP-1} satisfy condition~\eqref{Eq:8_2} across the entire state and disturbance spaces, with $\mu_{\mathcal N_i} \leq 0$, \emph{i.e.,}
	\begin{align}\label{N12}
		&\mathcal S_i(q^*_i,f_i(x_i,w_i)) - \mathcal S_i(q^*_i,x_i) - \begin{bmatrix}
			w_i\\
			x_i
		\end{bmatrix}^\top\!\!\!
		\mathcal X_i^*\begin{bmatrix}
			w_i\\
			x_i
		\end{bmatrix}\leq 0,\\\notag
		&\forall (x_i,w_i) \!\in\!  \mathbb R^{n_i}\!\times \mathbb R^{p_i}\!: \Vert (x_i,w_i)\Vert = 1.
	\end{align}
	Given the Lipschitz continuity of $f_i(x_i,w_i)$, it is readily demonstrated that the left-hand side of~\eqref{N12} remains Lipschitz continuous with respect to $(x_i,w_i)$, with a Lipschitz constant $\mathscr{L}_{i}^2 \in\mathbb{R}^+$ (with a slight abuse of notation), since the storage function $\mathcal{S}_i$ is continuously differentiable, and our analysis is conducted on the unit sphere (bounded domain). 
	For given $(x_i,w_i)$ and with a slight abuse of notation, let
	$\bar{z}:= \arg \min_z \Vert (x_i,w_i) - (\hat x^z_i,\hat w^z_i) \Vert$.  By incorporating the terms $\mathcal S_i(q_i,f_i(q_i,\hat x^{\bar{z}}_i,\hat w^{\bar{z}}_i))^* - \mathcal S_i(q_i,\hat x^{\bar{z}}_i)^* - \begin{bmatrix}
		\hat w^{\bar{z}}_i\\
		\hat x^{\bar{z}}_i
	\end{bmatrix}^\top\!\!\!\!
	\mathcal X_i^*\begin{bmatrix}
		\hat w^{\bar{z}}_i\\
		\hat x^{\bar{z}}_i
	\end{bmatrix}$ through addition and subtraction, one has
	\begin{align*}
		&\mathcal S_i(q^*_i,f_i(x_i,w_i)) - \mathcal S_i(q^*_i,x_i) - \begin{bmatrix}
			w_i\\
			x_i
		\end{bmatrix}^\top\!\!\!
		\mathcal X_i^*\begin{bmatrix}
			w_i\\
			x_i
		\end{bmatrix}\\
		& = \mathcal S_i(q^*_i,f_i(x_i,w_i)) - \mathcal S_i(q^*_i,x_i) - \begin{bmatrix}
			w_i\\
			x_i
		\end{bmatrix}^\top\!\!\!
		\mathcal X_i^*\begin{bmatrix}
			w_i\\
			x_i
		\end{bmatrix}\\
		& ~~~-( \mathcal S_i(q_i,f_i(q_i,\hat x^{\bar{z}}_i,\hat w^{\bar{z}}_i))^* - \mathcal S_i(q_i,\hat x^{\bar{z}}_i)^* - \begin{bmatrix}
			\hat w^{\bar{z}}_i\\
			\hat x^{\bar{z}}_i
		\end{bmatrix}^\top\!\!\!\!
		\mathcal X_i^*\begin{bmatrix}
			\hat w^{\bar{z}}_i\\
			\hat x^{\bar{z}}_i
		\end{bmatrix})\\
		&~~~~+\mathcal S_i(q_i,f_i(q_i,\hat x^{\bar{z}}_i,\hat w^{\bar{z}}_i))^* - \mathcal S_i(q_i,\hat x^{\bar{z}}_i)^* - \begin{bmatrix}
			\hat w^{\bar{z}}_i\\
			\hat x^{\bar{z}}_i
		\end{bmatrix}^\top\!\!\!\!
		\mathcal X_i^*\begin{bmatrix}
			\hat w^{\bar{z}}_i\\
			\hat x^{\bar{z}}_i
		\end{bmatrix}\!\!,
	\end{align*}
	with $\mathcal S_i(q_i,\cdot)^* = \mathcal S_i(q_i^*,\cdot) .$
	Given the Lipschitz continuity of~\eqref{N12} with a Lipschitz constant $\mathscr{L}_{i}^2$, we have
	\begin{align*}
		&\mathcal S_i(q^*_i,f_i(x_i,w_i)) - \mathcal S_i(q^*_i,x_i) - \begin{bmatrix}
			w_i\\
			x_i
		\end{bmatrix}^\top\!\!\!
		\mathcal X_i^*\begin{bmatrix}
			w_i\\
			x_i
		\end{bmatrix}\\
		&\leq \mathscr{L}_{i}^2 \, \min_z\Vert (x_i,w_i) - (\hat x^{{z}}_i,\hat w^{{z}}_i)\Vert\\
		& ~~~+ \mathcal S_i(q_i,f_i(q_i,\hat x^{\bar{z}}_i,\hat w^{\bar{z}}_i))^* - \mathcal S_i(q_i,\hat x^{\bar{z}}_i)^* - \begin{bmatrix}
			\hat w^{\bar{z}}_i\\
			\hat x^{\bar{z}}_i
		\end{bmatrix}^\top\!\!\!\!
		\mathcal X_i^*\begin{bmatrix}
			\hat w^{\bar{z}}_i\\
			\hat x^{\bar{z}}_i
		\end{bmatrix}\\
		&\leq \mathscr{L}_{i}^2 \, \max_{(x_i,w_i)}\min_z\Vert (x_i,w_i) - (\hat x^z_i,\hat w^z_i)\Vert\\
		& ~~~+ \mathcal S_i(q_i,f_i(q_i,\hat x^{\bar{z}}_i,\hat w^{\bar{z}}_i))^* - \mathcal S_i(q_i,\hat x^{\bar{z}}_i)^* - \begin{bmatrix}
			\hat w^{\bar{z}}_i\\
			\hat x^{\bar{z}}_i
		\end{bmatrix}^\top\!\!\!\!
		\mathcal X_i^*\begin{bmatrix}
			\hat w^{\bar{z}}_i\\
			\hat x^{\bar{z}}_i
		\end{bmatrix}\\
		& = \mathscr{L}_{i}^2 \varepsilon_i \!+ \mathcal S_i(q_i,\!f_i(q_i,\!\hat x^{\bar{z}}_i,\!\hat w^{\bar{z}}_i)\!)^* \!-\! \mathcal S_i(q_i,\hat x^{\bar{z}}_i)^* \!-\! \begin{bmatrix}
			\hat w^{\bar{z}}_i\\
			\hat x^{\bar{z}}_i
		\end{bmatrix}^\top\!\!\!\!\!
		\mathcal X_i^*\!\!\begin{bmatrix}
			\hat w^{\bar{z}}_i\\
			\hat x^{\bar{z}}_i
		\end{bmatrix}\!\!.
	\end{align*}
	According to condition~\eqref{SCP2-1} of SCP, we have
	\begin{align*}
		&\mathcal S_i(q^*_i,f_i(x_i,w_i)) - \mathcal S_i(q^*_i,x_i) - \begin{bmatrix}
			w_i\\
			x_i
		\end{bmatrix}^\top\!\!\!
		\mathcal X_i^*\begin{bmatrix}
			w_i\\
			x_i
		\end{bmatrix}\\
		&\leq \mu_{\mathcal N_i}^* + \mathscr{L}_{i} \varepsilon_{i}.
	\end{align*}
	Given our proposed condition in~\eqref{Con3}, one can conclude that
	\begin{align*}
		&\mathcal S_i(q^*_i,f_i(x_i,w_i)) - \mathcal S_i(q^*_i,x_i) - \begin{bmatrix}
			w_i\\
			x_i
		\end{bmatrix}^\top\!\!\!
		\mathcal X_i^*\begin{bmatrix}
			w_i\\
			x_i
		\end{bmatrix}\leq 0\\\notag
		&\forall (x_i,w_i) \!\in\!  \mathbb R^{n_i}\!\times \mathbb R^{p_i}\!: \Vert (x_i,w_i)\Vert = 1.
	\end{align*}
	This result can subsequently be extended globally to $\mathbb{R}^{n_i}\times \mathbb{R}^{p_i}$, thanks to the \emph{homogeneity property} of maps $f_i$ and storage functions $\mathcal{S}_i$. By employing a similar reasoning, it can be demonstrated that under condition~\eqref{Con3}, the $\mathcal{S}_i$ constructed through solving SCP in~\eqref{SCP-1} satisfies $\mathcal{S}_i(q_i^*,x_i) > 0$ across the entire state space. Then the $\mathcal{S}_i$ obtained by solving SCP in~\eqref{SCP-1} serve as storage functions for unknown subsystems $\Sigma_i, i \in \{1, \dots, M\}$, globally over $\mathbb{R}^{n_i}$, with guarantees of correctness, thereby concluding the proof.
\end{proof}

In order to construct the Lyapunov function for the interconnected system based on storage functions of individual subsystems derived from data, one needs to satisfy the traditional compositionality condition in~\eqref{EQQ:2}. Here, we employ the ADMM algorithm~\cite{boyd2011distributed} to efficiently satisfy condition~\eqref{EQQ:2} by decomposing it into local sub-problems. To do so, we define the following local
$\mathcal{G}_i\!=\! \big\{\mathcal X_i \, \big |  \mathcal{X}_i= \mathcal{X}_i^\top \big\}.$
In addition, we define the global constraint as:
\begin{align}
	\mathcal{G}\!=\! \Big\{
	(\mathcal X_1, \ldots, \mathcal X_M)
	\, \big | \, ~\text{condition}~\eqref{EQQ:2}~ \text{holds}\Big\}.
\end{align}
We now recast applicability of Theorem \ref{Thm:3} as a feasibility problem in the following lemma. 

\begin{lemma} \label{lem:admm}
	Consider an interconnected dt-NS $\Sigma = \mathcal{I}(\Sigma_1,\ldots,\Sigma_M)$. If there exist matrices $\mathcal X_1, \ldots, \mathcal X_M$ such that
	\begin{align} \notag
		&\mathcal X_i\in \mathcal{G}_i, \quad \forall i \in \{1,\ldots,M\}, \\\label{eq:opt}
		&(\mathcal X_1, \ldots, \mathcal X_M) \in \mathcal{G},
	\end{align}
	then $\mathcal V(x) := \sum_{i=1}^{M}\mathcal S_i(x_i)$ is a Lyapunov function for the interconnected system $\Sigma = \mathcal{I}(\Sigma_1,\ldots,\Sigma_M)$.
\end{lemma}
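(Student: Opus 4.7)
The plan is to verify that the set-membership formulation in~\eqref{eq:opt} encodes precisely the hypotheses of Theorem~\ref{Thm:3}, so that the conclusion of Lemma~\ref{lem:admm} follows by direct appeal to that theorem. In other words, I view this lemma as a purely definitional reformulation of the dissipativity compositionality result, packaged so as to expose a feasibility structure amenable to the ADMM decomposition developed in the remainder of the section.

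First, I would unpack the local constraints: $\mathcal X_i\in\mathcal{G}_i$ asserts symmetry of $\mathcal X_i$, which is precisely the structural assumption on the supply-rate matrix required in Definition~\ref{Def:4} for each $\Sigma_i$ to admit the storage function $\mathcal S_i$ with partitions $\mathcal X_i^{jj'}$, $j,j'\in\{1,2\}$. Since in this subsection each $\mathcal S_i$ has already been constructed from data by solving the SCP~\eqref{SCP-1} and, under condition~\eqref{Con3}, has been shown by Theorem~\ref{Thm2} to serve as a bona fide storage function for $\Sigma_i$ globally on $\mathbb R^{n_i}$, the dissipativity inequality~\eqref{Eq:8_2} is available at every subsystem level.

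Next, I would unpack the global constraint: by construction of $\mathcal{G}$, the membership $(\mathcal X_1,\ldots,\mathcal X_M)\in\mathcal{G}$ is, verbatim, the LMI~\eqref{EQQ:2} on the composite matrix $\mathcal X_{cmp}$ assembled block-wise from the individual $\mathcal X_i^{jj'}$ and the known interconnection matrix $\mathcal M$. Combining the availability of the storage functions $\mathcal S_i$ with the fulfillment of~\eqref{EQQ:2} places us exactly in the hypotheses of Theorem~\ref{Thm:3}; invoking that theorem then yields that $\mathcal V(x)=\sum_{i=1}^M \mathcal S_i(x_i)$ satisfies conditions~\eqref{alpha0}-\eqref{alpha1} and is therefore a Lyapunov function for $\Sigma=\mathcal{I}(\Sigma_1,\ldots,\Sigma_M)$, certifying GAS according to Definition~\ref{GAS}.

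The main obstacle is essentially notational rather than technical: the only care needed is to confirm that the block placement of each $\mathcal X_i$ into $\mathcal X_{cmp}$ respects the partition displayed in~\eqref{EQQ:2}, and that no further conditions on the $\mathcal X_i$ (e.g.\ positive or negative (semi)definiteness of individual blocks) are implicitly required beyond symmetry and the composite LMI. Once this bookkeeping is confirmed, the proof reduces to a one-line citation of Theorem~\ref{Thm:3}.
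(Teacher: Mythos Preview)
Your proposal is correct and matches the paper's intent: the paper does not give a separate proof of this lemma at all, introducing it merely as ``recast[ing] applicability of Theorem~\ref{Thm:3} as a feasibility problem,'' which is precisely the definitional reformulation you describe. Your observation that the local set $\mathcal{G}_i$ encodes only symmetry, so that the dissipativity inequality~\eqref{Eq:8_2} must be supplied contextually via Theorem~\ref{Thm2}, is a useful clarification the paper leaves implicit.
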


To reformulate the feasibility problem in Lemma~\ref{lem:admm} as a standard ADMM problem, we first define the following indicator functions:

\begin{align*}
	\mathds{1}_{\mathcal{G}_i}(\mathcal X_i)= 
	&\begin{cases}
		0,&\mathcal X_i \in \mathcal{G}_i,  \\
		+\infty, & \text{otherwise},
	\end{cases} \\
	\mathds{1}_\mathcal{G}(\mathcal X_1,\ldots,\mathcal X_M)=
	&\begin{cases}
		0, &(\mathcal X_1,\ldots, \mathcal X_M) \in \mathcal{G},  \\
		+\infty, & \text{otherwise}.
	\end{cases}
\end{align*}
Now by introducing auxiliary variables $\mathcal Z_i\in \mathbb R^{(n_i+p_i)\times (n_i+p_i)}$ for each subsystem, we rewrite~\eqref{eq:opt} as an optimization problem in the ADMM form as:
\begin{align}\label{eq:admm}
	&\text{ADMM}\!:\left\{
	\hspace{-0.5mm}\begin{array}{l}\min\limits_{\mathcal X_i,\mathcal Z_i} \quad\!\!\sum_{i = 1}^M(\mathds{1}_{\mathcal{G}_i}(\mathcal X_i))+\mathds{1}_\mathcal{G}(\mathcal Z_1,\ldots, \mathcal Z_M),\\
		\, \text{s.t.} \quad \quad \mathcal X_i- \mathcal Z_i = 0, \quad \forall i \in \{1,\ldots,M\}.\end{array}\right.
\end{align}
Since the first part of the objective function in~\eqref{eq:admm}, \emph{i.e.,} $\mathds{1}_{\mathcal{G}_i}$, is separable by subsystems, one can find a solution for~\eqref{eq:admm} parallelly by iterating over $\mathcal X_i$ and $\mathcal Z_i$, alternately, while leveraging new dual variables $\Lambda_i$. In summary, the ADMM algorithm~\cite{boyd2011distributed} updates the essential
variables at each iteration $k$ as follows:

\begin{itemize}
	\item For each $i \in \{1,\ldots,M\}$, solve the following local problem:
	\begin{align*}
		\mathcal X_i^{k+1}\in \underset{\mathcal X^* \in \mathcal{G}_i}{\operatorname{argmin}}~~\|\mathcal X_i^{*} - \mathcal Z_i^k + \Lambda_i^k\|^2_F.
	\end{align*}
	\item If $(\mathcal X_1^{k+1},\ldots, \mathcal X_M^{k+1}) \in \mathcal{G}$, the algorithm is successfully terminated. Otherwise, solve the following global problem:
	\begin{align*}
		(\mathcal Z_1^{k+1},\!\ldots,\! \mathcal Z_M^{k+1})\!\in\! \!\!\!\!\underset{(\mathcal Z_1^*,\ldots, \mathcal Z_M^*) \in \mathcal{G}}{\operatorname{argmin}}\sum_{i = 1}^M \|\mathcal X_i^{k+1} \!-\! \mathcal Z_i^* \!+\! \Lambda_i^k\|^2_F.  
	\end{align*}
	
	\item Update dual variables $\Lambda_i$ as
	\begin{align*}
		&\Lambda_{i}^{k+1}=\mathcal X_i^{k+1}- \mathcal Z_i^{k+1} + \Lambda_{i}^k,
	\end{align*}
	and return to the first step until a possible convergence.
\end{itemize}

\begin{remark}
	The data-driven results proposed in this work have the potential to be extended for designing a \emph{control Lyapunov function (CLF)} for unknown homogeneous interconnected systems. In such a case, the optimization programs are no longer convex due to the bilinearity between the unknown coefficients of the CLF and the controller. One possible solution is to treat the control inputs as discrete values (similar to switched systems), transforming the non-convex optimization problem into a mixed-integer linear programming.
\end{remark}

\begin{remark}
	Note that in the case of an unknown interconnection topology, while considering the entire state measurement of subsystems within $w_i$, as described in Remark~\ref{Rem:10}, does not necessitate full interconnection, it potentially increases the computational complexity to some extent. Specifically, this assumption raises the sample complexity in $\eqref{New6}$ and accordingly affects the computation of the maximum distance between points on the unit sphere and the set of data points, as described in $\eqref{EQ:41}$. Additionally, this assumption impacts the estimation of the Lipschitz constant in Algorithm~\ref{Alg:1}, as $\hat{w}_i^z$ also appears in this computation.
	In contrast, the results in Section~\ref{ADMM} do not involve these complexities but require knowledge of the interconnection topology. This highlights a trade-off: if the topology is known, it is advisable to use the results in Section~\ref{ADMM} (cf. the last case study). Otherwise, the results for an unknown topology offer a valuable solution to the problem, albeit with increased computational complexity (cf. the second case study).
\end{remark}

\section{Case Study}\label{Case_Study}
In this section, we begin by applying our results to a network composed of only two subsystems. We illustrate, utilizing both model-based techniques and our data-driven methods, that if both subsystems are dissipative in accordance with Definition~\ref{Def:4}, the interconnected network exhibits GAS. Subsequently, to demonstrate the effect of our data-driven method over networks with \emph{unknown} interconnection topology, we extend our analysis to a room temperature network with an unknown topology. Finally, to demonstrate that our data-driven results can be applied to \emph{nonlinear} dynamics with homogeneity, we present a nonlinear homogeneous network of degree one, consisting of $10,000$ subsystems (totaling $20,000$ dimensions), to highlight the applicability of our approach.

\begin{figure} 
	\begin{center}
		\includegraphics[width=0.7\linewidth]{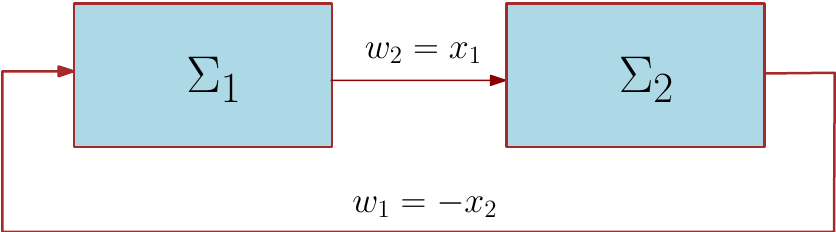} 
		\caption{Interconnected dt-NS $\mathcal{I}(\Sigma_1,\Sigma_2)$ with a coupling matrix $\mathcal M$ as in~\eqref{Coupling}.}
		\label{Fig3}
	\end{center}
\end{figure}

\subsection{Illustrative Example}
Consider two subsystems with the following dynamics:
\begin{align}\label{Sub}
	\Sigma_i\!: x_i(k+1)=A_i{x_i}(k)+B_iw_i(k), \quad i\in\{1,2\},
\end{align}
with system matrices
\begin{align*}
	A_1 =\begin{bmatrix} -0.1 & -0.2 \\ -0.2 &  -0.1 \end{bmatrix}\!, \quad B_1 =\begin{bmatrix} 0.2 & 0 \\ 0 &  0.2 \end{bmatrix}\!, \\
	A_2 =\begin{bmatrix} -0.5 & 1.8 \\ -0.2 &  1.01 \end{bmatrix}\!, \quad B_2 =\begin{bmatrix} 0.1 & 0 \\ 0 &  0.1 \end{bmatrix}\!.
\end{align*}
These two subsystems are interconnected through negative feedback, as illustrated in Fig.~\ref{Fig3}. It is evident that one can obtain the interconnected system $\Sigma := \mathcal{I}(\Sigma_1,\Sigma_2)$ as
\begin{align}\notag
	\Sigma\!:x(k+1)=A{x}(k),
\end{align}
with the block matrices
\begin{align*}
	A =\begin{bmatrix} A_1 & B_1 \\ B_2&  A_2 \end{bmatrix}\!\!,
\end{align*}
and the following interconnection matrix
\begin{align}\label{Coupling}
	\mathcal M =\begin{bmatrix} \begin{bmatrix} 0 & 0 \\ 0&  0 \end{bmatrix} & \begin{bmatrix} -1 & 0 \\ 0&  -1 \end{bmatrix} \\ \begin{bmatrix} 1 & 0 \\ 0&  1 \end{bmatrix} &  \begin{bmatrix} 0 & 0 \\ 0&  0 \end{bmatrix} \end{bmatrix}\!\!.
\end{align}

We initially presume access to the model dynamics and demonstrate that each subsystem $\Sigma_i$ displays dissipative behavior, fulfilling condition~\eqref{Eq:8_2}, through the utilization of a positive-definite storage function $\mathcal S_i.$ We then show that under the model-based compositionality condition in~\eqref{EQQ:2}, the interconnected system $\Sigma := \mathcal{I}(\Sigma_1,\Sigma_2)$ is GAS in the sense of Definition~\ref{GAS} and according to Theorem~\ref{Thm:3}.

We first fix a candidate positive-definite storage function $\mathcal S_i$ as
\begin{align}\label{Storage}
	\mathcal S_i(q_i,x_i) = x_i^\top P_ix_i,
\end{align}
with $P$ being a $(2\times2)$ symmetric positive-definite matrix. We now aim at showing condition~\eqref{Eq:8_2}. Given that we are aware of the model at this stage, we can derive the following model-based analysis:
\begin{align*}
	\mathcal S_i&(q_i,f_i(x_i,w_i)) = (A_ix_i + B_i w_i)^\top P_i (A_ix_i + B_i w_i) \\
	& = x_i^\top A_i^\top P_iA_ix_i + w_i^\top B_i^\top P_iB_iw_i + 2x_i^\top A_i^\top P_iB_iw_i \\
	&=\begin{bmatrix}x_i\\w_i\\\end{bmatrix}^\top\begin{bmatrix}\notag
		A_i^\top P_iA_i & A_i^\top P_iB_i\\
		B_i^\top P_iA_i& B_i^\top P_iB_i\\
	\end{bmatrix}\begin{bmatrix}x_i\\w_i\\\end{bmatrix}\!\!.
\end{align*}
By assuming the following LMI, 
\begin{align}\label{LMI}
	\begin{bmatrix}
		A_i^\top P_iA_i & A_i^\top P_iB_i\\
		B_i^\top P_iA_i& B_i^\top P_iB_i\\
	\end{bmatrix}\leq \begin{bmatrix}
		P_i + \mathcal X_i^{22} & \mathcal X_i^{21}\\
		\mathcal X_i^{12}& \mathcal X_i^{11}
	\end{bmatrix}\!\!,
\end{align}
we have
\begin{align*}
	\mathcal S_i(q_i,f_i(x_i,w_i)) &= \begin{bmatrix}x_i\\w_i\\\end{bmatrix}^\top\begin{bmatrix}\notag
		A_i^\top P_iA_i & A_i^\top P_iB_i\\
		B_i^\top P_iA_i& B_i^\top P_iB_i\\
	\end{bmatrix}\begin{bmatrix}x_i\\w_i\\\end{bmatrix}\\
	& \leq \begin{bmatrix}x_i\\w_i\\\end{bmatrix}^\top\begin{bmatrix}\notag
		P_i +  \mathcal X_i^{22}  &  \mathcal X_i^{21}\\
		\mathcal X_i^{12}& \mathcal X_i^{11}
	\end{bmatrix}\begin{bmatrix}x_i\\w_i\\\end{bmatrix}\\
	& \leq \underbrace{x_i^\top P_ix_i}_{\mathcal S_i(q_i,x_i)} + \begin{bmatrix}w_i\\x_i\\\end{bmatrix}^\top\begin{bmatrix}\notag
		\mathcal X_i^{11}  & \mathcal X_i^{12}\\
		\mathcal X_i^{21}& \mathcal X_i^{22}
	\end{bmatrix}\begin{bmatrix}w_i\\x_i\\\end{bmatrix}\!,
\end{align*}
which is exactly the satisfaction of condition~\eqref{Eq:8_2}. Hence, to demonstrate the satisfaction of condition~\eqref{Eq:8_2} by the candidate storage function $\mathcal{S}_i$ as in~\eqref{Storage}, it is sufficient to solve the constructed LMI in~\eqref{LMI} for each subsystem. This involves the design of positive-definite matrices $P_i$ and storage functions $\mathcal{X}_i$ for both subsystems.

We use the semidefinite programming (SDP) solver \textsf{SeDuMi}~\cite{sturm1999using} and fulfill the LMI in~\eqref{LMI} for both subsystems with the following matrices:
\begin{align*}
	P_1 &=\begin{bmatrix} 1.7779 & -0.0290 \\ -0.0290 & 1.9546 \end{bmatrix}\!\!,  \mathcal X^{11}_1 =\begin{bmatrix} 0.3372 & -0.1620 \\ -0.1620 & 0.5128 \end{bmatrix}\!\!, \\
	\mathcal X^{12}_1 &\!=\!\mathcal X^{21}_1\!=\!\begin{bmatrix} -0.0250 \!\!&\!\! -0.0486 \\ -0.0486 \!\!&\!\!  0.0923 \end{bmatrix}\!\!, \mathcal X^{22}_1\!=\!\begin{bmatrix} -1.0748 \!\!&\!\!  0.0579 \\ 0.0579 \!\!&\!\!  -1.1826 \end{bmatrix}\!\!,\\
	P_2 &=\begin{bmatrix} 1.1742 & -1.4717 \\ -1.4717 &  5.8613 \end{bmatrix}\!\!,  \mathcal X^{11}_2=\begin{bmatrix} 0.5509 & -0.0283 \\ -0.0283 & 0.6199 \end{bmatrix}\!\!, \\		\mathcal X^{12}_2 &\!=\!\mathcal X^{21}_2\!=\!\begin{bmatrix} -0.0147 \!\!&\!\!  -0.0256 \\ -0.0256 \!\!&\!\!  0.1947 \end{bmatrix}\!\!, \mathcal X^{22}_2\!=\!\begin{bmatrix} -0.6036 \!\!&\!\!  0.3231 \\ 0.3231 \!\!&\!\!  -0.9405 \end{bmatrix}\!\!,
\end{align*}
certifying that each subsystem is dissipative in the sense of Definition~\ref{Def:4}. Now, we look at $\Sigma=\mathcal{I}(\Sigma_1,\Sigma_2)$ with the coupling matrix $\mathcal M$ as in~\eqref{Coupling} aiming to satisfy compositionality condition \eqref{EQQ:2}. Given the obtained supply rates $\mathcal X_i$, one can construct matrix $\mathcal X_{cmp}$ in \eqref{EQQ:2} as
\begin{align*}
	\mathcal X_{cmp}&=\begin{bmatrix}\notag
		\mathcal X_1^{11}  & \mathbf{0}_{2\times 2} & \mathcal X_1^{12} & \mathbf{0}_{2\times 2}\\
		\mathbf{0}_{2\times 2} & \mathcal X_2^{11} & \mathbf{0}_{2\times 2} & \mathcal X_2^{12}\\
		\mathcal X_1^{21}& \mathbf{0}_{2\times 2} & \mathcal X_1^{22} & \mathbf{0}_{2\times 2}\\
		\mathbf{0}_{2\times 2} & \mathcal X_2^{21}& \mathbf{0}_{2\times 2} & \mathcal X_2^{22}
	\end{bmatrix}\!\!.
\end{align*}
The compositionality condition \eqref{EQQ:2} is then satisfied as 
\begin{align}\notag
	\begin{bmatrix}
		\mathcal M\\\mathds{I}_4
	\end{bmatrix}^\top &\!\!\mathcal X_{cmp}\begin{bmatrix}
		\mathcal M\\\mathds{I}_4
	\end{bmatrix} =\\
	&\begin{bmatrix}   -0.5238 & 0.0297 & 0.0103 & 0.0230 \\ 0.0297 & -0.5627 & 0.0230 & 0.1024 \\ 0.0103 & 0.0230 & -0.2664 & 0.1611 \\ 0.0230 & 0.1024 & 0.1611 & -0.4277\end{bmatrix} \preceq 0,
\end{align}
concluding that the interconnected system $\Sigma=\mathcal{I}(\Sigma_1,\Sigma_2)$ is GAS in the sense of Definition~\ref{GAS} and according to Theorem~\ref{Thm:3}.

We now assume that we are not aware of the model of subsystems in~\eqref{Sub} and the interconnection topology in~\eqref{Coupling}, and employ our data-driven techniques to tackle the problem. We fix the structure of our storage functions as $\mathcal S_i(q_i,x_i) = q^1_{i}x_{1_i}^2 + q^2_{i}x_{2_i}^2$ for both  $i\in\{1,2\}$. We employ Algorithm~\ref{Alg:2} as our proposed data-driven scheme. We collect samples from trajectories of each unknown subsystem and normalize them to be projected onto unit sphere. We now solve the $\text{SCP}$~\eqref{SCP} with collected data and compute coefficients of storage functions together with other decision variables as
\begin{align*}
	&\mathcal{S}_1(q_1,x_1)\!=\! 0.1x_{1_1}^2 + 0.2x_{2_1}^2,~ \mu_{\mathcal N_1}^*\!=\!-0.0607,~ \delta_{\mathcal N_1}^*\!=\! 0.039,\\
	&\mathcal X^{*11}_1 \!=\! 0.0001\mathds{I}_2, ~\mathcal X^{*12}_1 \!\!=\! \mathcal X^{*21}_1 \!\!=\! \mathbf{0}_{2\times 2},~  ~\mathcal X^{*22}_1 \!\!=\!\begin{bmatrix} -0.02 \!\!&\!\! 0 \\ 0 \!\!&\!\! -0.01 \end{bmatrix}\!\!, \\
	&	\mathcal{S}_2(q_2,x_2)\!=\! 0.01x_{1_2}^2 + 0.4x_{2_2}^2, ~ \mu_{\mathcal N_2}^*\!=\! -0.0108,~ \delta_{\mathcal N_2}^*\!=\! 0.03,\\
	&\mathcal X^{*11}_2 \!=\! 0.0001\mathds{I}_2, ~\mathcal X^{*12}_2 \!\!=\! \mathcal X^{*21}_2 \!\!=\! \mathbf{0}_{2\times 2},~  ~\mathcal X^{*22}_2 \!\!=\!-0.001\mathds{I}_2.
\end{align*}
We compute $\varepsilon_1 = \varepsilon_2 = 0.0041, \mathscr{L}^1_{1} = 0.1714, \mathscr{L}^2_{1} = 0.1631, \mathscr{L}^1_{2} = 0.3906,$ and $\mathscr{L}^2_{2} = 0.1454$ according to Steps 2 and 3 in Algorithm~\ref{Alg:2}. Since 
\begin{align*}
	&\mu_{\mathcal N_1}^* \!+\! \mathscr{L}^1_{1} \varepsilon_{1} = -0.06 < 0,\\
	&\mu_{\mathcal N_2}^* \!+\! \mathscr{L}^1_{2} \varepsilon_{2} = -0.0091 < 0,\\
	&\sum_{i=1}^2\big(\mu_{\mathcal N_i}^* \!+\! \delta_{\mathcal N_i}^* \!+\! \mathscr{L}^2_{i} \varepsilon_{i} \big) = -0.0210 + 0.0198\\
	&\quad \quad \quad \quad\quad\quad\quad\quad\quad\quad= -0.0012 < 0,
\end{align*}
according to Theorem~\ref{Thm1}, one can certify that the interconnected dt-NS $\Sigma=\mathcal{I}(\Sigma_1,\Sigma_2)$ is GAS with respect to $x = 0$, and $$\mathcal  V(q, x) = \sum_{i=1}^{2}\mathcal S_i(q_i,x_i) = 0.1x_{1_1}^2 + 0.2x_{2_1}^2 + 0.01x_{1_2}^2 + 0.4x_{2_2}^2$$ is a valid Lyapunov function for the interconnected dt-NS with correctness guarantees. This is compatible with the results of the model-based approach that we acquired in the first step of this illustrative example.

\subsection{Room Temperature Network with Unknown Topology}
To demonstrate the scalability of our method, we apply our data-driven findings on a room temperature network with an unknown topology. The evolution of the temperature $T(\cdot)$ can be described by the following interconnected network~\cite{meyer}:
\begin{align}\notag
	\Sigma\!:T(k+1)=A{T}(k),
\end{align}
where $A$ is a matrix with diagonal elements $ a_{ii}=1-2\varphi-\theta$, $i\in\{1,\ldots, M\}$, and off-diagonal elements $a_{ij}=\varphi$, $i,j\in \{1,\ldots, M\}, i\neq j$, depending on the unknown interconnection topology.
In addition, $ T(k)=[T_1(k);\ldots;T_{M}(k)]$, and $\varphi$, $\theta$ are thermal factors between rooms $i$ and $j$, and the external environment and the room $i$, respectively. Now by characterizing each individual room as 
\begin{align*}
	\Sigma_i\!:T_i(k+1)=a_{ii}{T_i}(k)+\varphi w_{i}(k),
\end{align*}
with $w_i$ being defined according to~\eqref{Eq:4} with an unknown interconnection topology $\mathcal M$, one has $\Sigma=\mathcal{I}(\Sigma_1,\ldots,\Sigma_{M})$. We assume that the underlying model is unknown to us.

Our primary objective is to systematically construct a Lyapunov function for the unknown interconnected dt-NS by leveraging data-driven storage functions of individual subsystems through solving $\text{SCP}$~\eqref{SCP}. Subsequently, we confirm that the interconnected network achieves GAS with respect to its equilibrium point $x = 0$, ensuring a correctness guarantee.

Initially, we establish the structure of our storage functions as $\mathcal S_i(q_i,x_i) = q_{i}x_i^2$ for all $i\in\{1,\ldots,20\}$. We employ our proposed data-driven scheme, outlined in Algorithm~\ref{Alg:2}. We gather samples from trajectories of each unknown room and normalize them for projection onto a unit sphere. Subsequently, we solve the $\text{SCP}$~\eqref{SCP}, computing coefficients of storage functions alongside other decision variables as follows:
\begin{align*}\notag
	&\mathcal{S}^*_i(q_i,x_i)= 1.7x_i^2, ~\mu_{\mathcal N_i}^*=-0.1013, ~\delta_{\mathcal N_i}^*=0.02,\\
	&\mathcal X^{*11}_i =\begin{bmatrix}0.0001 & 0 & \cdots & \cdots & 0 \\  0 & 0.0001 & 0 & \cdots & 0 \\ 0 & 0 & 0 & \cdots & 0\\ \vdots &  & \ddots & \ddots & \vdots \\ 0& \cdots & \cdots & 0 & 0\end{bmatrix}_{19\times 19}\!\!\!\!\!\!\!\!\!\!\!\!\!\!\!\!, \\
	&\mathcal X^{*12}_i = \mathbf{0}_{19\times 1},~  \mathcal X^{*21}_i \!\!=\! \mathcal X^{*12^\top}_i\!\!, ~\mathcal X^{*22}_i = -0.001.
\end{align*}
We compute $\varepsilon_i = 0.0091$, $\mathscr{L}^1_{i} = 1.3317,$ and $\mathscr{L}^2_{i} = 0.3617$  according to Steps 3, 4 in Algorithm~\ref{Alg:2}. Since 
\begin{align*}
	&\mu_{\mathcal N_i}^* \!+\! \mathscr{L}^1_{i} \varepsilon_{i} = -891 \times 10^{-4} < 0,\quad \forall i\in\{1,\dots,20\}, \\
	&\sum_{i=1}^{20}(\mu_{\mathcal N_i}^* \!+\! \delta_{\mathcal N_i}^* \!+\! \mathscr{L}^2_{i} \varepsilon_{i}) = \sum_{i=1}^{20}(-780 \times 10^{-4})_i  = -1.56 < 0,
\end{align*}
according to Theorem~\ref{Thm1}, one can verify that unknown interconnected dt-NS $\Sigma=\mathcal{I}(\Sigma_1,\dots,\Sigma_{20})$ is GAS with respect to $x = 0$, and $$\mathcal  V(q, x) = \sum_{i=1}^{20}\mathcal S_i(q_i,x_i) = \sum_{i=1}^{20} 1.7x_i^2$$ is a valid Lyapunov function for the interconnected dt-NS with a correctness guarantee.

\begin{remark}
	Note that certain physical insights about unknown models can help guide the selection of homogeneous basis functions. For instance, since room temperature models are typically polynomial due to their underlying physics, we chose the basis functions to be monomials based on the state. Regarding the number of monomial elements, one can include all combinations of state variables (while preserving homogeneity), and the scenario optimization problem will automatically eliminate some of them during the solution of the SCP by assigning zero to their coefficients.
\end{remark}

\begin{remark}
	The term ``unknown topology'' in our work does not imply that our data-driven techniques can handle any arbitrary interconnection topology. In fact, for model-based dissipativity reasoning, knowledge of the topology is required to verify the compositionality condition in \eqref{EQQ:2}. However, since our data-driven conditions in~\eqref{Con1}-\eqref{Con2} are different from model-based techniques, the interconnection topology does not require to be known. While the unknown topology only influences $w_i$ in the SCP during the solution process, this does not imply that if the SCP is solvable, it will be so for any arbitrary interconnection topology. For each new unknown topology, a new set of data would need to be collected, followed by an attempt to solve the SCP.
\end{remark}

\begin{figure} 
	\begin{center}
		\includegraphics[width=1.05\linewidth]{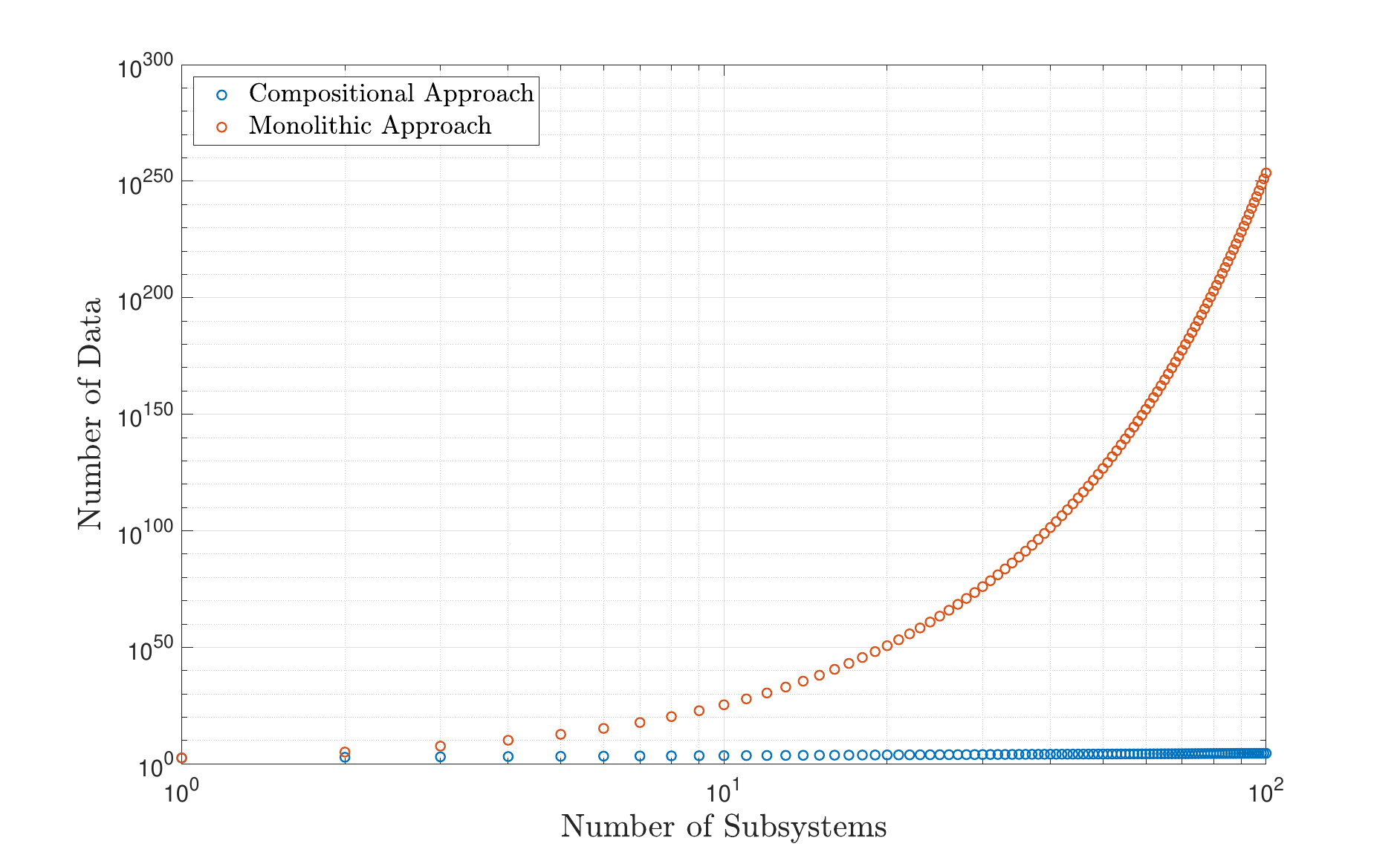} 
		\caption{Sample complexity analysis: Required amount of data vs. number of subsystems in monolithic and compositional approaches. Plot is in logarithmic scale. In our data-driven compositional approach, the sample complexity is reduced to the level of subsystems, leading to a \emph{linear} growth in the number of required data as the number of subsystems increases. However, in the monolithic approach, the sample complexity grows \emph{exponentially} with the number of subsystems, rendering it impractical in practice.} 
		\label{Fig2}
	\end{center}
\end{figure}

\subsection{Sample Complexity Analysis}
{In this subsection, we delve into a comprehensive analysis of sample complexity within the frameworks of both monolithic and compositional approaches. To do so, we have visually represented the relationship between the required amount of data and the number of subsystems for both methodologies in Fig.~\ref{Fig2}. It becomes evident that our data-driven compositional approach significantly mitigates sample complexity by aligning it with the granularity of subsystems. Consequently, as the number of subsystems increases, the growth in required data remains \emph{linear}. Conversely, in the monolithic approach (\emph{i.e.,} solving the problem directly over the network), sample complexity escalates \emph{exponentially} in tandem with the number of subsystems, making it operationally unfeasible.}

\subsection{Nonlinear Homogeneous Network}\label{third}
{To demonstrate that our data-driven results can be applied to nonlinear dynamics with homogeneity, we present here a \emph{nonlinear homogeneous network} of degree one, consisting of $10,000$ subsystems (totaling $20,000$ dimensions), to highlight the applicability of our approach for nonlinear systems with high dimensions. We assume that the interconnection topology is known as a unidirectional ring to use the results of Section~\ref{ADMM}.}

{The evolution of each subsystem can be described by the following nonlinear homogeneous system:
	\begin{align*}
		\Sigma_i\!:x_i(k\!+\!1)\!=\!H_i\mathcal F_i(x_i(k))+0.1w_{i}(k),
	\end{align*}
	where $x_i = [x_{1_i}; x_{2_i}]$ and
	\begin{align}\notag
		H_i &= \begin{bmatrix} -0.1 & 0.2 \\ 0.15 & 0.12 \end{bmatrix}\!\!, \\\label{new9}
		\mathcal F_i(x_i(k)) &= \begin{bmatrix} \sqrt{|x_{1_i}(k)x_{2_i}(k)| + \gamma(x_{1_i}^2(k) + x_{2_i}^2(k))} \\ x_{2_i} \end{bmatrix}\!\!,
	\end{align}
	where $\gamma = 0.1$, and $w_i$ is defined according to~\eqref{Eq:4} for a unidirectional ring interconnection topology, \emph{i.e.,} $w_i = x_{i-1}$ with $x_0 = x_M$.}

{It is clear that each subsystem $\Sigma_i$ is homogeneous of degree one, \emph{i.e.,} for any $\lambda > 0$ and $x_i\in \mathbb R^{n_i},w_i\in \mathbb R^{p_i}$, $f_i(\lambda x_i, \lambda w_i) = \lambda f_i(x_i,w_i)$:
	\begin{align*}
		f_i(\lambda x_i,& \lambda w_i) \\
		&=H_i\begin{bmatrix} \sqrt{|\lambda x_{1_i}(k)\lambda x_{2_i}(k)|  \!+ \! \gamma(\lambda^2x_{1_i}^2(k) \!+ \! \lambda^2x_{2_i}^2(k))} \\ \lambda x_{2_i} \end{bmatrix} \\
		&~~~+ 0.1(\begin{bmatrix} \lambda w_{1_{i}}(k)\\ \lambda w_{2_{i}}(k) \end{bmatrix})\\
		&=  \lambda  H_i\begin{bmatrix} \sqrt{|x_{1_i}(k)x_{2_i}(k)| + \gamma(x_{1_i}^2(k) + x_{2_i}^2(k))} \\  x_{2_i} \end{bmatrix} \\ &~~~+0.1\lambda(\begin{bmatrix}  w_{1_{i}}(k)\\  w_{2_{i}}(k) \end{bmatrix})\\
		&=  \lambda f_i(x_i,w_i).
\end{align*}}
{Now the dynamics of the interconnected network  $\Sigma=\mathcal{I}(\Sigma_1,\ldots,\Sigma_{M})$ with $M = 10000$ can be established as
	\begin{align}\notag
		\Sigma\!:x(k+1)=H\mathcal F(x(k)) + 0.1 \mathcal M x(k),
	\end{align}
	where $H$ is a matrix with block diagonal elements $H_i$ as in~\eqref{new9} and block off-diagonal zero matrices, while $\mathcal M$ is a unidirectional ring interconnection, with elements $m_{i+1,i} = \mathds{I}_2, i\in\{1,\dots,M-1\}$, $m_{1,M} = \mathds{I}_2$, and all other elements are $\mathbf{0}_{2\times 2}$. In addition, $ x(k)=[x_1(k);\ldots;x_{M}(k)]$, and $\mathcal F(x(k))=[\mathcal F_1(x_1(k));\ldots;\mathcal F_M(x_M(k))]$. Given that each subnetwork is homogeneous of degree one, it is clear that the interconnected network also remains homogeneous of degree one, \emph{i.e.,} for any $\lambda > 0$ and $x\in \mathbb R^{n}$, $f(\lambda x) = \lambda f(x)$.}

{Initially, we establish the structure of our storage functions as $\mathcal S_i(q_i,x_i) = q_{1_i}x_{1_i}^2 + q_{2_i}x_{1_i}x_{2_i} + q_{3_i}x_{2_i}^2$ for all $i\in\{1,\ldots,10000\}$. It is clear that $\mathcal S_i(q_i,x_i)$ is a homogeneous function of degree $2$, \emph{i.e.,} for any $\lambda > 0$ and $x_i\in \mathbb R^{n_i}$, $\mathcal S_i(\lambda x_i) = \lambda^2 \mathcal S_i(x_i)$. We gather $1296$ samples from trajectories of each unknown subsystem (denoted as $\mathcal N_i = 1296$) and normalize them for projection onto a unit sphere. Subsequently, we solve the $\text{SCP}$~\eqref{SCP-1} with $\mathcal N_i = 1296$, computing coefficients of storage functions alongside other decision variables as follows, for any $i\in\{1,\dots,10000\}$:
	\begin{align*}
		&\mathcal{S}^*_i(q_i,x_i)= 1.6x_{1_i}^2 +1.2x_{1_i}x_{2_i} + 1.9x_{2_i}^2, \mu_{\mathcal N_i}^*=-0.777, \\
		&\mathcal X^{*11}_i \!\!=\! 0.0001\mathds{I}_2, \mathcal X^{*12}_i \!\!=\! \mathbf{0}_{2\times 2},  \mathcal X^{*21}_i \!\!=\! \mathcal X^{*12^\top}_i\!\!\!\!,\mathcal X^{*22}_i \!\!=\! -0.001\mathds{I}_2.
	\end{align*}
	We compute $\varepsilon_i = 0.0155$, $\mathscr{L}^1_{i} = 2.3436,$ and $\mathscr{L}^2_{i} =  2.3533$. Since 
	\begin{align*}
		&\mu_{\mathcal N_i}^* \!+\! \mathscr{L}_{i} \varepsilon_{i} = -7408\times 10^{-4} < 0, \quad \forall i\in\{1,\dots,10000\},\\
	\end{align*}
	according to Theorem~\ref{Thm2}, the constructed $\mathcal{S}_i$ obtained by solving SCP~\eqref{SCP-1} serve as storage functions for unknown subsystems $\Sigma_i$ globally over $\mathbb{R}^{n_i}$, with  guarantees of correctness. We also employ the ADMM approach proposed in Section~\ref{ADMM} and ensure that the dissipativity compositionality condition in $\eqref{EQQ:2}$ is satisfied. Then one can verify that unknown interconnected dt-NS $\Sigma=\mathcal{I}(\Sigma_1,\dots,\Sigma_{10000})$ is GAS with respect to $x = 0$, and $$\mathcal  V(q, x) = \sum_{i=1}^{10000}\mathcal S_i(q_{i},x_i) = \sum_{i=1}^{10000} 1.6x_{1_i}^2 +1.2x_{1_i}x_{2_i} + 1.9x_{2_i}^2$$ is a valid Lyapunov function for the interconnected dt-NS with a correctness guarantee. The computation of storage functions took $2.14$ seconds for each subsystem ($21400$ seconds for all $10000$ subsystems in a serial computation) on a MacBook Pro (Apple M2Max with 32 GB RAM Memory).}

{\section{Conclusion}\label{Discussion}
	In this work, we proposed a data-driven \emph{divide and conquer} approach for verifying the GAS property of interconnected nonlinear networks with a \emph{fully unknown topology}, while ensuring guaranteed confidence. The proposed framework utilized dissipativity-type properties of subsystems, which are characterized by the notion of storage functions. In our data-driven setting, we collected data from trajectories of each unknown subsystem and formulated a scenario convex program (SCP) to enforce the necessary conditions of storage functions. By solving the SCP using the collected data, we built a Lyapunov function for the interconnected system based on storage functions of individual subsystems, derived from data, by offering a newly developed data-driven compositionality condition in the dissipativity setting. To demonstrate the effectiveness of our data-driven approaches, we applied them to an unknown room temperature network consisting of $10000$ rooms with an unknown interconnection topology. As part of our future work, we aim to develop a data-driven approach for constructing \emph{control Lyapunov functions} for unknown nonhomogeneous interconnected systems with unknown topology. In addition, generalizing our data-driven approach to a broader class of nonlinear systems is being investigated as future work.}

\section*{REFERENCES}\vspace{-0.7cm}

\bibliographystyle{IEEEtran}
\bibliography{biblio}

\begin{IEEEbiography}[{\includegraphics[width=1in,height=1.25in,clip,keepaspectratio]{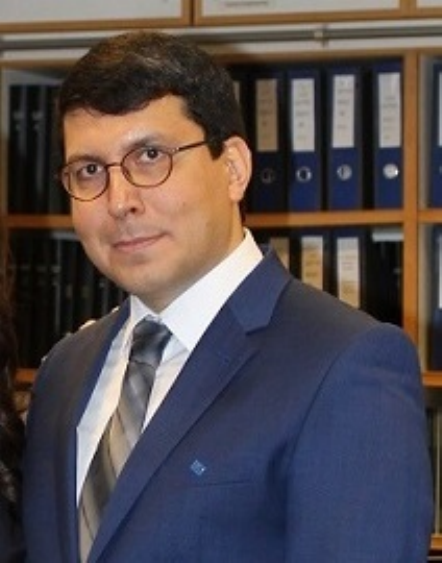}}]{Abolfazl Lavaei}~(M'17--SM'22)
		is an Assistant Professor in the School of Computing at Newcastle University, United Kingdom. Between January 2021 and July 2022, he was a Postdoctoral Associate in the Institute for Dynamic Systems and Control at ETH Zurich, Switzerland. He was also a Postdoctoral Researcher in the Department of Computer Science at LMU Munich, Germany, between November 2019 and January 2021. He received the Ph.D. degree in Electrical Engineering from the Technical University of Munich (TUM), Germany, in 2019. He obtained the M.Sc. degree in Aerospace Engineering with specialization in Flight Dynamics and Control from the University of Tehran (UT), Iran, in 2014. He is the recipient of several international awards in the acknowledgment of his work including  Best Repeatability Prize (Finalist) at the ACM HSCC 2025, IFAC ADHS 2024, and IFAC ADHS 2021, HSCC Best Demo/Poster Awards 2022 and 2020, IFAC Young Author Award Finalist 2019, and Best Graduate Student Award 2014 at University of Tehran with the full GPA (20/20). His research interests revolve around the intersection of Control Theory, Formal Methods in Computer Science, and Data Science.
\end{IEEEbiography}

\begin{IEEEbiography}[{\includegraphics[width=1in,height=1.25in,clip,keepaspectratio]{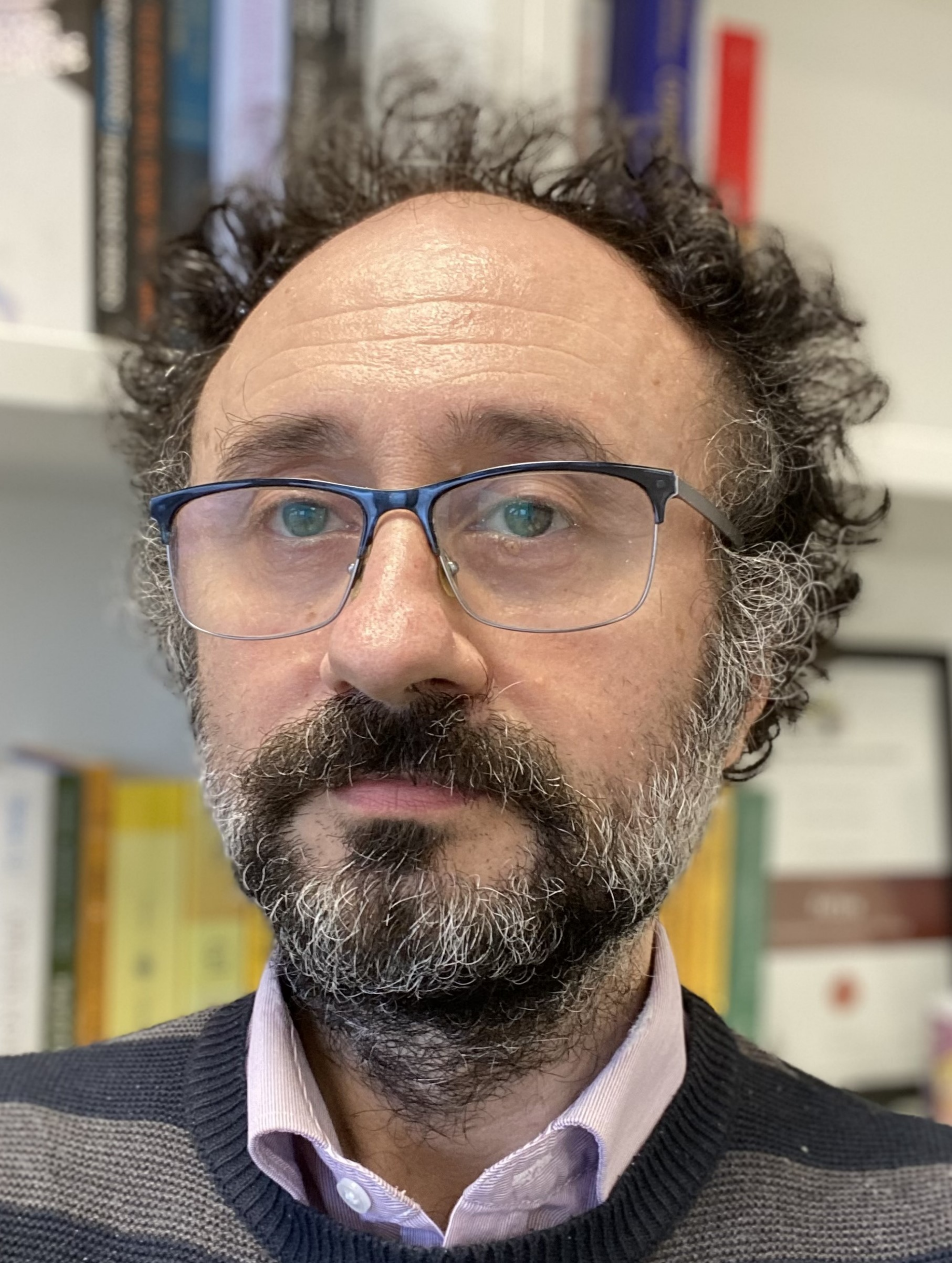}}]{David Angeli}~{(FIEEE) is a
		Professor of Nonlinear Networks Dynamics within the Dept of Electrical and Electronic Engineering of Imperial College London.  
		He received the B.S. degree in computer science engineering and the PhD degree in control theory from the University of Florence, Florence, Italy, in 1996 and 2000, respectively.
		Since 2000, he has been an Assistant Professor and since 2005, an Associate Professor with the Department of Information Engineering, University of Florence. In 2007, he was a Visiting Professor with I.N.R.I.A de Rocquencourt, Paris, France, and in 2008, he joined as a Senior Lecturer the Department of Electrical and Electronic Engineering, Imperial College London, London, U.K., where he is currently a Professor and the Director of Postgraduate Teaching. He is the author of more than 120 journal papers in the research areas of stability of nonlinear systems, control of constrained systems (MPC), chemical reaction networks theory, and smart grids.
		Prof Angeli was an Associate Editor for the IEEE Transactions in Automatic Control and Automatica and was elevated to Fellow of the IEEE in 2015 for contributions to nonlinear control theory. He is a Fellow of the IET since 2018 and the recipient of the Honeywell Medal from InstMC in 2021.}
\end{IEEEbiography}

\end{document}